\documentclass[aps,pra,twocolumn,longbibliography,superscriptaddress,10pt]{revtex4-1}

\usepackage{amsmath,amssymb,graphicx,amsmath,amssymb,amsthm,physics,bm,bbm,color,mathtools,anyfontsize,latexsym,epstopdf,scalerel}
\usepackage{appendix}
\usepackage[colorlinks=true,linkcolor=blue,citecolor=green,plainpages=false,pdfpagelabels]{hyperref}
\usepackage[normalem]{ulem}
\usepackage{soul}
\usepackage{newlfont}
\usepackage{amsfonts}
\usepackage{epsfig}
\usepackage[thinc]{esdiff}
\usepackage{times}

\newtheorem{proposition}{Proposition}

\def\tr{\operatorname{tr}}
\newtheorem{theorem}{Theorem}

\newenvironment{prof}{\noindent \textit{Proof sketch:}\ignorespaces}{\hspace*{\fill}$\Box$\medskip}
\DeclareOldFontCommand{\rm}{\normalfont\rmfamily}{\mathrm}
\DeclareMathOperator{\sgn}{sgn}

\DeclareOldFontCommand{\rm}{\normalfont\rmfamily}{\mathrm}
\newcommand{\brad}[1]{( #1|}
\newcommand{\ked}[1]{|#1)}
\newcommand{\bradked}[2]{( #1 | #2 )}
\newcommand{\kedbrad}[2]{|#1 )( #2|}

\bibliographystyle{apsrev4-1}

\newtheorem*{corollary*}{Corollary}

\begin{document}

\title{Family of Exact and Inexact Quantum Speed Limits for Completely Positive and Trace-Preserving Dynamics}

\author{Abhay Srivastav}
\affiliation{Harish-Chandra Research Institute,  A CI of Homi Bhabha National Institute, Chhatnag Road, Jhunsi, Prayagraj  211019, Uttar Pradesh, India}
\affiliation{Optics and Quantum Information Group, The Institute of Mathematical Sciences, A CI of Homi Bhabha National Institute,\\
C.I.T. Campus, Taramani, Chennai 600113, India}

\author{Vivek Pandey}
\affiliation{Institute of Physics, Faculty of Physics, Astronomy and Informatics, Nicolaus Copernicus University, Grudziadzka 5/7, 87-100 Toru\'n, Poland
}

\author{Brij Mohan}
\email{brijhcu@gmail.com}
\affiliation{Nano and Molecular Systems Research Unit, University of Oulu, FI-90014 Oulu, Finland}
\affiliation{Department of Physical Sciences, Indian Institute of Science Education and Research (IISER), Mohali, Punjab 140306, India}

\author{Arun Kumar Pati}
\affiliation{Centre for Quantum Technology,  KIIT, Bhubaneswar, 751024,  India}

\begin{abstract}
Traditional quantum speed limits formulated in density matrix space are generally unattainable for a wide class of dynamics and it is difficult to characterize the fastest possible dynamics. To address this, we present two distinct quantum speed limits in Liouville space for Completely Positive and Trace-Preserving (CPTP) dynamics. The first bound saturates for time-optimal CPTP dynamics, while the second bound is exact for all states and all CPTP dynamics. Our bounds have a clear physical and geometric interpretation arising from the uncertainty relations for operators acting on Liouville space, and the geometry of quantum evolution in Liouville space. We also obtain the form of the Liouvillian, which generates the time-optimal CPTP dynamics that connect the given initial and target states. To illustrate our findings, we show that the speed of evolution in Liouville space bounds the growth of the spectral form factor and Krylov complexity of states, which are crucial for studying information scrambling and quantum chaos. In another important application, we show that our results can help us understand the counter-intuitive phenomenon of the Mpemba effect in non-equilibrium open quantum dynamics, as the minimal relaxation time scale obtained by speed limits is dictated by the eigenmodes of the Liouvillian.
\end{abstract}

\maketitle

In non-equilibrium quantum dynamics, a fundamental problem is to understand the speed at which quantum systems evolve when they are subjected to external environments and fields. Currently, this has become one of the primary areas of research, considering the importance of understanding the speed of many interesting physical phenomena such as the quantum version of the Mpemba effect~\cite{Nava2019, Carollo2021, Chatterjee2023, Rylands2023, Nava2024, Strachan2024, Moroder2024}, relaxation processes in open quantum systems~\cite{Zhou2023, Wang2023}, the growth of spread and Krylov complexity~\cite{Parker2019, Balasubramanian2022, Alishahiha2023, Caputa2024, Nandy2024}, information scrambling~\cite{Vikram2024}, neutrino oscillation~\cite{Jha24}, etc. This notion also holds significant practical relevance in the rapidly developing field of quantum information technologies, as the speed of the dynamics impacts the performance and computational power of quantum devices ranging from quantum computers~\cite{Lloyd2000} and quantum sensors~\cite{Beau2017, Herb2024} to quantum batteries~\cite{Campaioli2017, Bera2020, Mohan2021, Mohan2022} and quantum thermal machines~\cite{Bera2022}. Hence, it is crucial to understand the general properties of quantum dynamics that affect the evolution speed of the quantum system.

In physical processes, the rate at which the state of a quantum system changes is referred to as the speed of evolution. This speed of evolution constrains the rate of change of distinguishability between initial and time-evolved states, establishing a lower bound on the evolution time known as quantum speed limit (QSL), thus representing inherent dynamical limitation.
The first QSL was the Mandelstam-Tamm (MT) bound \cite{Mandelstam1945}, which was derived for unitary dynamics generated by a time-independent Hamiltonian using the Robertson uncertainty relation~\cite{Robertson1929}, and later using the geometry of quantum evolution \cite{Anandan1990}. The MT bound states that the evolution time is lower bounded by the ratio of the Hilbert space angle between the initial and final states to the variance of the driving Hamiltonian. This bound provides an unambiguous meaning to the time-energy uncertainty relation~\cite{Aharonov1961, Busch2008} and saturates for the optimal unitary dynamics \cite{Carlini2006, Mostafazadeh2009}. Subsequent developments have also generalized the MT bound to time-dependent Hamiltonians~\cite{Pfeifer1993}. Recently, the exact form of the MT bound has been obtained for unitary dynamics~\cite{Pati2023}. The MT bound has also been extended to the unitary dynamics of operators~\cite{Pintos2022, Hamazaki2022, Mohan2022, Carabba2022, Hamazaki2024}, superoperators~\cite{Hornedal2022}, entanglement~\cite{Shrimali2022, Mohan2023, Mohan2024}, as well as various classical dynamics~\cite{Nicholson2020, Das2023}, and non-Hermitian quantum dynamics~\cite{Thakuria2024, Nishiyama2024, Hornedal2024}. Over time, the MT bound has found numerous applications in rapidly growing quantum technologies~\cite{Lloyd2000, Murphy2010, Campaioli2017, Lam2021}, and several experiments have verified it~\cite{Ness2021, Wu2024}. Recently, Ref.~\cite{Johnsson2023} studied the exact and inexact speed limits for change of unitary operators rather than of states or observables.

In the last decade, various generalized QSLs have been formulated for more general types of quantum dynamics using mathematical inequalities~\cite{Campo2013, Deffner2013, Sun2015, Uzdin_2016, Teittinen_2019, Pires2024} and the geometry of density matrix space~\cite{Taddei2013, Pires2016, Campaioli2019, Funo2019, Connor2021, Nakajima2022, Lan2022, Mai2023,  Rosal2024, Mai2024}. For a comprehensive overview of historical and recent advancements in speed limits, see Refs.~\cite{Deffner2017, Gong2022}. The non-geometric QSLs are, in general, loose as they are obtained using a sequence of various inequalities. The geometric QSLs saturate for dynamics that transport the state along the corresponding geodesic. However, in general, finding the general form of the dynamics (i.e., Kruas operators or Liouvillian for given initial and target states) that takes the system along the geodesic for these QSLs is not known. The equation for the geodesic for some of these QSLs is given in the Appendix of Ref.~\cite{Connor2021}. In general, these geometric QSLs are not attainable for a wide class of dynamics (apart from the dynamics that takes the system along the geodesic) due to (i) the presence of classical uncertainty within the states and (ii) some parts of the generator of the dynamics may not contribute to the change of the distinguishability. Moreover, these QSLs do not resemble the MT bound, as they do not depend on the variance of the generator of the dynamics. Consequently, they may not fully capture the dynamical properties needed to control the pace of dynamics, as the MT bound does for unitary dynamics~\cite{Caneva2009, Poggi2013, Deffner2017, Ansel2024}. Most of these QSLs are challenging to compute as the system's dimension or the complexity of the dynamics increases. This difficulty arises because they generally require estimating norms of non-Hermitian operators, the quantum Fisher information, and the square roots of density matrices. Furthermore, these speed limits provide a lower bound on the evolution time rather than an exact relation. Thus, the true generalization of the MT bound, which offers a time-energy-dissipation uncertainty relation in its exact form for Completely Positive Trace-Preserving (CPTP) dynamics, remains an important question.

In this article, we prove exact and inexact quantum speed limits for CPTP dynamics for finite-dimensional quantum systems in Liouville space. First, we derive an inexact quantum speed limit using the uncertainty relation for non-Hermitian operators acting on Liouville space. This provides a lower bound on the evolution time, defined as the ratio of the Liouville space angle between the initial and final states to the time-averaged variance of the Liouvillian. We show that this bound is saturated for time-optimal CPTP dynamics, which drives the system along the geodesic in Liouville space. Further, we derive the exact bound on the evolution time using the exact uncertainty relation for operators acting on Liouville space. We show that the evolution time can be exactly expressed as the ratio of the analogous Wootters distance in Liouville space to the variance of the non-classical part of the Liouvillian. We have also provided geometrical interpretations for both the bounds. Additionally, we show two important applications of our results. First, we show that our speed limit sets bounds on the spectral form factor and Krylov complexity of states, which are important quantities for studying information scrambling and quantum chaos. Second, our speed limits reveal why states far from equilibrium can relax more quickly than those near equilibrium. This insight naturally explains the Mpemba effect in open quantum dynamics. 

\textbf{\it Liouville Space:--} For every linear operator \( A = \sum_{ij} \alpha_{ij}\ketbra{i}{j}\)  acting on a $d$-dimensional Hilbert space  \(\mathsf{H}_d\), there exists a vector \( \ked{A} = \sum_{ij} \alpha_{ij} \ket{j} \otimes \ket{i} \). These vectorized linear operators form a Hilbert space \( \mathsf{H}_d \otimes \mathsf{H}_d \) also known as Liouville space, with the inner product \( \bradked{A}{B} := \tr{(A^\dagger B)} \)~\cite{Sarandy_2004, Gyamfi2020}. We denote the linear operators that act on the vectors in Liouville space by \(\mathcal{A}, \mathcal{B}, \ldots\), etc. In Liouville space, arbitrary CPTP dynamics is described by the following master equation (see Appendix~\ref{appendixA}):
\begin{equation} \label{EOM}
    \frac{d}{dt}\ked{{\tilde\rho_t}} = \left(\mathcal{L}_t - \frac{1}{2} \left\{ (\tilde\rho_t|\mathcal{L}_t|\tilde\rho_t) + (\tilde\rho_t|\mathcal{L}_t^\dagger|\tilde\rho_t) \right\} \right)\ked{{\tilde\rho_t}},
\end{equation}
where \( \mathcal{L}_t \) is the generator of the dynamics and is called Liouvillian, and \( \mathcal{L}^\dagger_t \) is its adjoint. Here \( \ked{\tilde\rho_t} := \frac{\ked{\rho_t}}{\norm{\ked{\rho_t}}} \) represents the normalized state vector corresponding to the time-evolved density matrix \( \rho_t \), with \( \norm{\ked{\rho_t}} = \sqrt{\tr{(\rho_t^2)}} \) representing the Hilbert-Schmidt norm of \( \rho_t \).  For the rest of the article, we drop the subscript '$t$' from \( \mathcal{L}_t \) and assume that the Liouvillian is time-dependent unless stated otherwise. We assume $\hbar = 1$ throughout this work.  In this work, we refer an equality relation as "exact bound" and inequality relation as "inexact bound". Moreover, the inexact bounds that are attainable are referred to as tight bounds. We now present two uncertainty relations for operators acting on Liouville space: 

\begin{proposition}\label{InEUN}
For any two non-Hermitian operators acting on Liouville space, \(\mathcal{A}\) and \(\mathcal{B}\), there exists an inexact uncertainty relation given by 
\begin{equation}\label{uncertreln}
 (\Delta \mathcal{A})^2(\Delta \mathcal{B})^2 \geq \left| \tr(\mathcal{A}^\dagger \mathcal{B}\mathcal{P}) - \tr(\mathcal{A}^\dagger\mathcal{P}) \tr(\mathcal{B}\mathcal{P}) \right|^2,
\end{equation}
where \(\mathcal{P} := \kedbrad{\tilde{\rho}}{\tilde{\rho}}\) represents the projection operator acting on  Liouville space, associated with  the state vector \(\ked{\tilde{\rho}}\), and \(\tr(\mathcal{O}\mathcal{P})\) and \((\Delta \mathcal{O})^2 = \tr(\mathcal{O}^\dagger \mathcal{O}\mathcal{P}) - \tr(\mathcal{O}^\dagger\mathcal{P}) \tr(\mathcal{O}\mathcal{P})\) are the average and the variance of \(\mathcal{O}\) in the state vector \(\ked{\tilde{\rho}}\).
\end{proposition}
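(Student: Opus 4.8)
The plan is to recast the inequality as a Cauchy--Schwarz inequality in Liouville space applied to a pair of suitably ``centered'' vectors. First I would define the mean-subtracted vectors
\begin{align}
\ked{a} &= \mathcal{A}\ked{\tilde\rho} - \tr(\mathcal{A}\mathcal{P})\ked{\tilde\rho}, \\
\ked{b} &= \mathcal{B}\ked{\tilde\rho} - \tr(\mathcal{B}\mathcal{P})\ked{\tilde\rho},
\end{align}
which are simply the actions of $\mathcal{A}$ and $\mathcal{B}$ on $\ked{\tilde\rho}$ with their averages projected out. The strategy is to show that the two variances are the squared norms of these vectors and that the covariance-like term on the right-hand side is exactly their inner product, after which the bound follows from Cauchy--Schwarz.

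The key computational step is to expand $\bradked{a}{a}$ using $\brad{a} = \brad{\tilde\rho}\mathcal{A}^\dagger - \tr(\mathcal{A}^\dagger\mathcal{P})\brad{\tilde\rho}$ together with the normalization $\bradked{\tilde\rho}{\tilde\rho} = 1$ and the identities $\tr(\mathcal{A}\mathcal{P}) = (\tilde\rho|\mathcal{A}|\tilde\rho)$ and $\tr(\mathcal{A}^\dagger\mathcal{P}) = \overline{\tr(\mathcal{A}\mathcal{P})}$. The cross terms then combine so that $\bradked{a}{a} = \tr(\mathcal{A}^\dagger\mathcal{A}\mathcal{P}) - \tr(\mathcal{A}^\dagger\mathcal{P})\tr(\mathcal{A}\mathcal{P}) = (\Delta\mathcal{A})^2$, and likewise $\bradked{b}{b} = (\Delta\mathcal{B})^2$. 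An entirely parallel expansion of $\bradked{a}{b}$ shows that the mixed cross terms cancel down to $\bradked{a}{b} = \tr(\mathcal{A}^\dagger\mathcal{B}\mathcal{P}) - \tr(\mathcal{A}^\dagger\mathcal{P})\tr(\mathcal{B}\mathcal{P})$, i.e.\ precisely the quantity appearing inside the modulus on the right-hand side of Eq.~\eqref{uncertreln}.

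With these identifications in hand, the Cauchy--Schwarz inequality $|\bradked{a}{b}|^2 \le \bradked{a}{a}\,\bradked{b}{b}$ in the Liouville-space inner product immediately yields the claimed bound. The main point requiring care --- rather than a genuine obstacle --- is the non-Hermiticity of $\mathcal{A}$ and $\mathcal{B}$: one must keep the daggers and complex conjugates straight throughout, since $\tr(\mathcal{A}^\dagger\mathcal{P})$ is the complex conjugate of $\tr(\mathcal{A}\mathcal{P})$ rather than equal to it, and the variance is built from $\mathcal{A}^\dagger\mathcal{A}$ rather than $\mathcal{A}^2$. Provided the bookkeeping is done consistently, no positivity or Hermiticity assumptions on $\mathcal{A}$ or $\mathcal{B}$ are needed, and equality holds exactly when $\ked{a}$ and $\ked{b}$ are parallel in Liouville space.
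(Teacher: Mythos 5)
Your proposal is correct and follows essentially the same route as the paper's own proof: the paper likewise defines the mean-subtracted vectors $\ked{f} := (\mathcal{A} - \tr(\mathcal{A}\mathcal{P}))\ked{\tilde{\rho}}$ and $\ked{g} := (\mathcal{B} - \tr(\mathcal{B}\mathcal{P}))\ked{\tilde{\rho}}$, identifies $\bradked{f}{f} = (\Delta\mathcal{A})^2$, $\bradked{g}{g} = (\Delta\mathcal{B})^2$, and $\bradked{f}{g}$ with the covariance term, and applies the Cauchy--Schwarz inequality in Liouville space. Your additional care with the daggers (e.g.\ $\tr(\mathcal{A}^\dagger\mathcal{P}) = \overline{\tr(\mathcal{A}\mathcal{P})}$) and the stated equality condition are consistent with, and slightly more explicit than, the paper's argument.
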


\begin{proposition}\label{EUN}
For any Hermitian operator \(\mathcal{A}\) and a non-Hermitian operator \(\mathcal{B}\), acting on Liouville space, there exists an exact uncertainty relation for all state vectors \(\ked{\tilde{\rho}}\) given by 
\begin{equation}\label{exactuncrln}
 \delta_{\mathcal{B}}\mathcal{A} \Delta \mathcal{B}_{nc} = \frac{1}{2},
\end{equation}
where \(\mathcal{B}_{nc} = \mathcal{B} - \mathcal{B}_{cl}\) is the non-classical (non-diagonal) part of \(\mathcal{B}\) in the basis of \(\mathcal{A}\), i.e., \(\{\ked{a_i}\}\), and \(\delta_{\mathcal{B}}\mathcal{A}\) is the uncertainty of the operator \(\mathcal{A}\) in the state vector \(\ked{\tilde{\rho}}\), defined as $(\delta_{\mathcal{B}}\mathcal{A})^{-2} := \sum_i \frac{ \brad{a_i}\mathcal{G}\ked{a_i}^2}{(a_i|\mathcal{P}|a_i)}$, $\mathcal{B}_{cl} := \sum_i \kedbrad{a_i}{a_i} \frac{(a_i|\mathcal{C}|a_i)}{(a_i|\mathcal{P}|a_i)}$, where \(\mathcal{G} := (\mathcal{B}\mathcal{P} + \mathcal{P} \mathcal{B}^\dagger - \mathcal{P}\tr[\mathcal{P}(\mathcal{B} + \mathcal{B}^\dagger)])\), \(\mathcal{C} := \frac{i}{2}(\mathcal{B}\mathcal{P} - \mathcal{P} \mathcal{B}^\dagger)\) and $\mathcal{B}_{cl}$ is the classical part of $\mathcal{B}$ that commutes with $\mathcal{A}$.
\end{proposition}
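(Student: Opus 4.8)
The plan is to reduce the claimed superoperator identity to a statement about the classical Fisher information of the measurement statistics of $\mathcal{A}$, thereby lifting the exact (Hall-type) uncertainty relation to Liouville space. I would fix the spectral resolution $\mathcal{A}=\sum_i a_i\,\kedbrad{a_i}{a_i}$ and set $p_i:=(a_i|\mathcal{P}|a_i)=|\bradked{a_i}{\tilde\rho}|^2$, the probability of outcome $a_i$ in the state $\ked{\tilde\rho}$. The entire argument rests on reading $\mathcal{G}$ and $\mathcal{C}$ dynamically: $\mathcal{G}$ controls the rate of change of the probabilities $p_i$, whereas $\mathcal{C}$ controls the rate of change of their phases, so that $\mathcal{B}_{cl}$ is exactly the phase-drift (classical) piece and $\mathcal{B}_{nc}$ the probability-changing (nonclassical, osmotic) piece.

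First I would establish the identification $(a_i|\mathcal{G}|a_i)=\dot{p}_i$, the velocity of $p_i$ under the normalized flow generated by $\mathcal{B}$. Substituting $\mathcal{L}\to\mathcal{B}$ into the master equation~\eqref{EOM} gives $\frac{d}{dt}\ked{\tilde\rho}=(\mathcal{B}-\mathrm{Re}\,(\tilde\rho|\mathcal{B}|\tilde\rho))\ked{\tilde\rho}$, and differentiating $p_i=(a_i|\mathcal{P}|a_i)$ reproduces precisely the combination $\mathcal{B}\mathcal{P}+\mathcal{P}\mathcal{B}^\dagger-\mathcal{P}\tr[\mathcal{P}(\mathcal{B}+\mathcal{B}^\dagger)]$ sandwiched between $(a_i|$ and $\ked{a_i}$, since the subtracted drift term is $-2p_i\,\mathrm{Re}\,(\tilde\rho|\mathcal{B}|\tilde\rho)$. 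The left-hand uncertainty then collapses to the classical Fisher information,
\begin{equation}
(\delta_{\mathcal{B}}\mathcal{A})^{-2}=\sum_i\frac{(a_i|\mathcal{G}|a_i)^2}{p_i}=\sum_i\frac{\dot{p}_i^{\,2}}{p_i}=:F.
\end{equation}

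Next I would resolve $\mathcal{B}$ in the eigenbasis of $\mathcal{A}$ and write $(a_i|\mathcal{B}|\tilde\rho)(\tilde\rho|a_i)=\sqrt{p_i}\,(r_i+i s_i)$, so that $(a_i|\mathcal{C}|a_i)$ isolates the imaginary part $s_i$; the diagonal operator $\mathcal{B}_{cl}$ is then exactly what cancels the phase content, leaving $\mathcal{B}_{nc}$ acting on $\ked{\tilde\rho}$ with the real (osmotic) coefficients $r_i$ in the frame rotating with the phases $\phi_i$ of $\bradked{a_i}{\tilde\rho}=\sqrt{p_i}\,e^{i\phi_i}$. A direct computation of the variance gives $\norm{\mathcal{B}_{nc}\ked{\tilde\rho}}^2-|(\tilde\rho|\mathcal{B}_{nc}|\tilde\rho)|^2=\sum_i r_i^2-(\sum_i\sqrt{p_i}\,r_i)^2$, while expanding $\dot{p}_i=2\sqrt{p_i}\,r_i-2p_i\sum_j\sqrt{p_j}\,r_j$ shows $F=4[\sum_i r_i^2-(\sum_i\sqrt{p_i}\,r_i)^2]$. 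Hence $(\Delta\mathcal{B}_{nc})^2=F/4$, and multiplying the two identities yields $(\delta_{\mathcal{B}}\mathcal{A})^2(\Delta\mathcal{B}_{nc})^2=F^{-1}\cdot F/4=1/4$, i.e.\ $\delta_{\mathcal{B}}\mathcal{A}\,\Delta\mathcal{B}_{nc}=\tfrac12$, for every state vector $\ked{\tilde\rho}$.

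The main obstacle I expect is the bookkeeping that cleanly separates the phase-drift part from the probability-changing part while $\mathcal{B}$ is non-Hermitian: one must carry the phases $\bradked{a_i}{\tilde\rho}=\sqrt{p_i}\,e^{i\phi_i}$ through every step so that $\mathcal{B}_{nc}\ked{\tilde\rho}$ retains only the real coefficients $r_i$ and the cross terms in the variance cancel against the $-2p_i\sum_j\sqrt{p_j}\,r_j$ drift in $\dot{p}_i$. Verifying that $(a_i|\mathcal{C}|a_i)/p_i$ is precisely the diagonal entry removing $s_i$ — so that no residual phase content survives in $\mathcal{B}_{nc}\ked{\tilde\rho}$ — is the delicate point on which the exactness of the relation ultimately hinges.
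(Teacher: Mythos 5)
Your proof is correct, but it is organized quite differently from the paper's. The paper proceeds by pure operator--trace algebra: it first establishes the variance decomposition $\Delta\mathcal{B}_{\rm nc}^2=\Delta\mathcal{B}^2-\Delta\mathcal{B}_{\rm cl}^2$ (by adding and subtracting $\tr(\mathcal{P}\mathcal{B}_{\rm cl}^\dagger\mathcal{B}_{\rm cl})$ and exploiting the anti-Hermiticity of $\mathcal{B}_{\rm cl}$), and then expands both $4\Delta\mathcal{B}_{\rm nc}^2$ and $(\delta_{\mathcal{B}}\mathcal{A})^{-2}$ into one and the same basis expression, matching them term by term. You instead work in coordinates from the start: writing $\bradked{a_i}{\tilde\rho}=\sqrt{p_i}\,e^{i\phi_i}$ and $(a_i|\mathcal{B}|\tilde\rho)(\tilde\rho|a_i)=\sqrt{p_i}(r_i+is_i)$, you identify $(a_i|\mathcal{G}|a_i)=\dot p_i$ under the normalized flow (this is precisely the paper's Eq.~\eqref{eomforprojector}, so the identification is legitimate), reduce $(\delta_{\mathcal{B}}\mathcal{A})^{-2}$ to the classical Fisher information $F=\sum_i\dot p_i^{\,2}/p_i=4[\sum_ir_i^2-(\sum_i\sqrt{p_i}\,r_i)^2]$, and obtain $\Delta\mathcal{B}_{\rm nc}^2=\sum_ir_i^2-(\sum_i\sqrt{p_i}\,r_i)^2$ directly from the explicit action $\mathcal{B}_{\rm nc}\ked{\tilde\rho}=\sum_ir_ie^{i\phi_i}\ked{a_i}$; all of these component identities check out against the paper's definition of the variance of a non-Hermitian superoperator. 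What your route buys is transparency: it is the Liouville-space transcription of Hall's original Fisher-information derivation, it makes the probability/phase (osmotic/classical) split explicit, and it connects the proposition directly to the dynamical and Wootters-distance interpretation that the paper only invokes later, in Theorem~\ref{ExactMT}. What the paper's route buys is that the additivity $\Delta\mathcal{B}^2=\Delta\mathcal{B}_{\rm cl}^2+\Delta\mathcal{B}_{\rm nc}^2$ emerges as a standalone operator identity, which is reused to prove $\Delta\mathcal{L}_{\rm nc}\leq\Delta\mathcal{L}$ (the tightening of Theorem~\ref{MT}); note that your parametrization recovers it as well, since $\mathcal{B}_{\rm cl}\ked{\tilde\rho}=\sum_i is_ie^{i\phi_i}\ked{a_i}$ is orthogonal in the required sense to $\mathcal{B}_{\rm nc}\ked{\tilde\rho}$.

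One caveat you resolved silently (and correctly): the cancellation $\mathcal{B}_{\rm nc}\ked{\tilde\rho}=\sum_ir_ie^{i\phi_i}\ked{a_i}$ requires the diagonal entries of $\mathcal{B}_{\rm cl}$ to equal $is_i/\sqrt{p_i}$, i.e., $\mathcal{B}_{\rm cl}$ anti-Hermitian and built from $\tfrac12(\mathcal{B}\mathcal{P}-\mathcal{P}\mathcal{B}^\dagger)$, which is the definition used in the paper's appendix proof. The main-text definition $\mathcal{C}=\tfrac{i}{2}(\mathcal{B}\mathcal{P}-\mathcal{P}\mathcal{B}^\dagger)$ carries an extra factor of $i$, giving real (Hermitian) diagonal entries $-s_i/\sqrt{p_i}$, which would \emph{not} cancel the phase content and would spoil the exact relation. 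That factor is an internal inconsistency of the paper between the statement and its appendix; your computation follows the appendix convention, which is the one for which the proposition is true.
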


The proof and details of the above uncertainty relations are provided in Appendix~\ref{appendixA}. Next, we derive the inexact QSL for arbitrary CPTP dynamics in finite-dimensional quantum systems.
\begin{theorem}\label{MT}
For a $d$-dimensional quantum system, the lower bound on the evolution time required to evolve a given state \(\rho_0\) to a final state \(\rho_T\) under a CPTP dynamics generated by the Liouvillian \(\mathcal{L}\) is given by
\begin{equation}\label{LMT}
T \geq \frac{\Theta(\rho_0, \rho_T)}{\langle\!\langle \Delta \mathcal{L} \rangle\!\rangle_{T}},
\end{equation}
where \(\Theta(\rho_0, \rho_T) := \arccos{(\tilde{\rho}_0|\tilde{\rho}_T)}\) is Liouville space angle between the state vectors \(\ked{\tilde{\rho}_0}\) and \(\ked{\tilde{\rho}_T}\), corresponding to the initial and final states \(\rho_0\) and \(\rho_T\), respectively, \(\Delta \mathcal{L}\) is the variance of the Liouvillian in the time-evolved state vector \(\ked{\tilde{\rho}_t}\), and \(\langle\!\langle X_t \rangle\!\rangle_{T} := \frac{1}{T} \int_{0}^{T} \, dt X_t \) is the time average of \(X_t\).
\end{theorem}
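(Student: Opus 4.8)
The plan is to treat the theorem as a Mandelstam--Tamm-type bound in the projective Liouville space, following the geometry-of-evolution strategy: I would realize the normalized trajectory $t \mapsto \ked{\tilde\rho_t}$ as a curve on the unit sphere of $\mathsf{H}_d\otimes\mathsf{H}_d$, show that its instantaneous Fubini--Study speed equals the Liouvillian variance $\Delta\mathcal{L}$ supplied by Proposition~\ref{InEUN}, and then use that the geodesic is the shortest curve joining two points to convert the integrated speed into the angle $\Theta(\rho_0,\rho_T)$.

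First I would differentiate (\ref{EOM}) and use the normalization $(\tilde\rho_t|\tilde\rho_t)=1$ to write $\ked{\dot{\tilde\rho}_t} = (\mathcal{L}-c_t)\ked{\tilde\rho_t}$ with $c_t = \mathrm{Re}\,(\tilde\rho_t|\mathcal{L}|\tilde\rho_t)$, and introduce the squared Fubini--Study speed $v_t^2 := (\dot{\tilde\rho}_t|\dot{\tilde\rho}_t) - |(\tilde\rho_t|\dot{\tilde\rho}_t)|^2$. The central computation is to verify that $v_t = \Delta\mathcal{L}$: expanding the two inner products gives $(\dot{\tilde\rho}_t|\dot{\tilde\rho}_t) = (\tilde\rho_t|\mathcal{L}^\dagger\mathcal{L}|\tilde\rho_t) - c_t^2$ and $(\tilde\rho_t|\dot{\tilde\rho}_t) = i\,\mathrm{Im}\,(\tilde\rho_t|\mathcal{L}|\tilde\rho_t)$, so that the real and imaginary parts of the mean recombine into $v_t^2 = (\tilde\rho_t|\mathcal{L}^\dagger\mathcal{L}|\tilde\rho_t) - |(\tilde\rho_t|\mathcal{L}|\tilde\rho_t)|^2 = (\Delta\mathcal{L})^2$, exactly the variance of Proposition~\ref{InEUN} evaluated with $\mathcal{P}=\kedbrad{\tilde\rho_t}{\tilde\rho_t}$. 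The point is that the nonlinear subtraction in (\ref{EOM}) removes precisely the norm-growth and global-phase contributions, leaving only the genuine uncertainty of $\mathcal{L}$.

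Second, I would invoke the geometry: since the overlap $(\tilde\rho_0|\tilde\rho_T)=\tr(\rho_0\rho_T)/\sqrt{\tr(\rho_0^2)\,\tr(\rho_T^2)}$ is real and nonnegative, $\Theta(\rho_0,\rho_T)=\arccos(\tilde\rho_0|\tilde\rho_T)$ is exactly the Fubini--Study geodesic distance between the endpoints. The length of the actual trajectory is $L=\int_0^T v_t\,dt = \int_0^T \Delta\mathcal{L}\,dt = T\langle\!\langle\Delta\mathcal{L}\rangle\!\rangle_T$, and because a geodesic is the shortest curve connecting its endpoints, $\Theta(\rho_0,\rho_T)\le L$, which rearranges to $T \ge \Theta(\rho_0,\rho_T)/\langle\!\langle\Delta\mathcal{L}\rangle\!\rangle_T$. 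Equivalently, one may use that the distance from the fixed point $\ked{\tilde\rho_0}$ is $1$-Lipschitz along the curve, so $|\dot\Theta_t|\le v_t=\Delta\mathcal{L}$, and integrate from $\Theta_0=0$; the uncertainty relation of Proposition~\ref{InEUN} is what underlies this instantaneous speed bound, which is why the inequality is generically strict (``inexact'') and saturates only when the trajectory coincides with the geodesic, i.e.\ for time-optimal dynamics.

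I expect the main obstacle to be the speed computation in the presence of a non-Hermitian generator: one must track the real versus imaginary parts of $(\tilde\rho_t|\mathcal{L}|\tilde\rho_t)$ carefully and confirm that the compensating terms in (\ref{EOM}) cancel exactly against the cross terms, so that the non-Hermitian variance of Proposition~\ref{InEUN}---rather than $(\tilde\rho_t|\mathcal{L}^\dagger\mathcal{L}|\tilde\rho_t)$ or a Hermitian-part variance---is what appears. A secondary technical point is justifying that the geodesic distance on the unit sphere of Liouville space is indeed the $\arccos$ of the real overlap and that the time dependence of $\mathcal{L}$ does not affect the arc-length argument, both of which follow once the trajectory is confirmed to stay normalized under (\ref{EOM}).
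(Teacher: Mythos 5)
Your proposal is correct, and it is essentially the geometric derivation the paper itself gives for Theorem~\ref{MT} (sketched after the main-text proof and carried out in Appendix~\ref{appendixC}): your computation that the Fubini--Study speed in Liouville space equals $\Delta\mathcal{L}$ reproduces the paper's identities $\bradked{\dot{\tilde{\rho}}_t}{\dot{\tilde{\rho}}_t}-\bradked{\tilde\rho_t}{\dot{\tilde{\rho}}_t}\bradked{\dot{\tilde{\rho}}_t}{\tilde\rho_t}=\Delta\mathcal{L}^2$, and the path-length-versus-geodesic step $S\geq\Theta(\rho_0,\rho_T)$ is exactly the paper's argument. The only difference is emphasis: the paper's main-text proof instead applies the non-Hermitian superoperator uncertainty relation (Proposition~\ref{InEUN}) to $\frac{d}{dt}\tr(\mathcal{P}_t\mathcal{P}_0)$ and integrates the resulting differential inequality, which is the route you mention only in passing via the Lipschitz remark.
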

\begin{prof}
The upper bound on the rate of change of the distinguishability between the initial state vector \(\ked{\tilde{\rho}_0}\) and the time-evolved state vector \(\ked{\tilde{\rho}_t}\), using Eq.~\eqref{EOM} and the uncertainty relation given in Proposition~\ref{InEUN}, can be written as $\left|\frac{d}{dt}(\tilde{\rho}_0|\tilde{\rho}_t)^2\right| \leq 2\Delta\mathcal{L}\Delta\mathcal{P}_0$, where \(\mathcal{P}_0 = \kedbrad{\tilde{\rho}_0}{\tilde{\rho}_0}\). Integrating this inequality and using \(\left| \int f(t) \, dt \right| \leq \int |f(t)| \, dt\), we obtain the bound in Eq.~\eqref{LMT} (see Appendix~\ref{appendixB}).
Alternatively, Theorem~\ref{MT} can be proven using the geometry of quantum evolution in Liouville space. In Liouville space, the infinitesimal distance \(dS\) along the evolution path traced out by the state vector \(\ked{\tilde{\rho}_t}\) is \(dS^{2} = \Delta \mathcal{L}^{2} dt^{2}\). This equation provides the speed of evolution of the quantum system in Liouville space, i.e., \(dS/dt = \Delta \mathcal{L}\), which says that uncertainty of the Liouvillian drives the system forward in time. Integrating \(dS/dt = \Delta \mathcal{L}\) with respect to time and recognizing the fact that \( S \geq \Theta(\rho_0, \rho_{T}) \), yields the bound in Eq.~\eqref{LMT}. Moreover, the bound in Eq.~\eqref{LMT} can also be expressed in terms of the Kraus operators instead of the Liouvillian, and it can be generalized beyond CPTP dynamics (see Appendix~\ref{appendixC}).
\end{prof}

For Lindblad dynamics, the evolution of state is given as $\dot{\rho_t}=-i[H_t,\rho_t]+ \mathcal{D}(\rho_t)$, where $\mathcal{D}(\rho_t)=\sum_{k=1}^{d^2-1}\gamma^{k}_{t}\left(L^{k}_{t}\rho_t L^{k\dagger}_{t}-\frac{1}{2}\{L^{k\dagger }_{t}L^{k}_{t},\rho_t\}\right)$ is the dissipator. Hence, the square of the speed of evolution can be written as
\begin{eqnarray}\label{BM}
    \Delta\mathcal{L}^2&=&\frac{1}{\tr(\rho_t^2)}\left(\tr(\dot{\rho}_t^\dagger\dot{\rho}_t)-(\tr(\rho_t\dot{\rho}_t))^2\right)\nonumber\\
    &=&\frac{1}{\tr(\rho_t^2)}\left(2\tr(\rho_t^2 H_t^2-\rho_t H_t\rho_t H_t) + \tr(\mathcal{D}(\rho_t)^2)\right .\nonumber\\
 && +2i\tr(\rho_t[H_t,\mathcal{D}(\rho)])- \tr(\rho_t \mathcal{D}(\rho_t))).
\end{eqnarray}

It is worth mentioning that the term ``$2i\tr(\rho_t[H_t,\mathcal{D}(\rho)])$" in Eq.~\eqref{BM} vanishes if the Lindblad operators $L_{k}$ are Hermitian and commute with the Hamiltonian $H_{t}$~\cite{Brody2019}.

The Liouvillian (obtained by vectorizing Lindblad master equation) can be decomposed into two parts as \(\mathcal{L} = -i\mathcal{L}_{H} + \mathcal{L}_{D}\), where $\mathcal{L}_H :=\left(\mathbb{I}\otimes H_t-H_{t}^T\otimes\mathbb{I}\right)$ is associated with the Hamiltonian $H_t$, and $\mathcal{L}_D :=\sum_{k} \gamma^k_{t} \left(L^{k*}_{t}\otimes L^{k}_t -\frac{1}{2}\left(\mathbb{I}\otimes L^{k\dagger}_{t}L^{k}_{t}+(L^{k\dagger }_{t}L^{k}_{t})^T\otimes\mathbb{I}\right)\right)$ is associated with the dissipative part of the Lindblad master equation, where $\{\gamma^k_t\}$ are the decay rates and $\{L^k_t\}$ are the Lindblad operators. Using this decomposition, the square of the speed of evolution can be written as (see Appendix~\ref{appendixB})
\begin{equation}
    \Delta\mathcal{L}^2={\Delta\mathcal{L}_H^2+\Delta\mathcal{L}_D^2+i \tr\left([\![\mathcal{L}_H,\mathcal{L}_D]\!]\mathcal{P}_t\right)},\label{speed}
\end{equation}
where \(\Delta\mathcal{L}_{H}\) and \(\Delta\mathcal{L}_{D}\) are the speeds corresponding to the unitary and the dissipative parts of the dynamics, respectively, and \([\![\mathcal{L}_H,\mathcal{L}_D]\!]:=(\mathcal{L}_H\mathcal{L}_D-\mathcal{L}^\dagger_D\mathcal{L}_H)\)
represents the non-commutativity of the unitary and the dissipative parts of the generator. The last term in Eq.~\eqref{speed} captures the nontrivial correlation between the unitary and dissipative speeds and is upper bounded by \(2\Delta \mathcal{L}_{H} \Delta \mathcal{L}_{D}\). Since \(\Delta\mathcal{L}_{H}\) is related to the uncertainty in energy and \(\Delta\mathcal{L}_{D}\) is related to dissipation, the bound in Eq. \eqref{LMT} can be regarded as time-energy-dissipation uncertainty relation. Therefore, the bound can be seen as the generalization of the original Mandelstam-Tamm bound, which was given for unitary dynamics~\cite{Mandelstam1945, Anandan1990}. Alternatively, the Liouvillian can be decomposed into an anti-Hermitian and a Hermitian part as \(\mathcal{L} = -i\mathcal{L}_{+} + \mathcal{L}_{-}\), where \(\mathcal{L}_{+}=\mathcal{L}_{H} +i\left(\mathcal{L}_{D}-\mathcal{L}^\dagger_{D}\right)/2\) and \(\mathcal{L}_{-}=\left(\mathcal{L}_{D}+\mathcal{L}^\dagger_{D}\right)/2\) are associated with the reversible and irreversible parts of the dynamics, respectively. This distinction is based on the fact that the dynamics associated with $\mathcal{L}_+$ preserves the purity and von Neumann entropy of the state; on the other hand, the non-unitary part of the dynamics associated with $\mathcal{L}_-$ may change the purity and von Neumann entropy of the state. This allows us to express the speed of evolution in terms of the speeds associated with the reversible and irreversible parts of the dynamics, similar to Eq.~\eqref{speed}.

 For Lindblad dynamics, the speed of evolution can also be expressed in terms of the eigenvalues and eigenvectors (decay modes) of the Liouvillian, provided that the Liouvillian is time-independent (see Appendix~\ref{Exp}). This suggests that by eliminating the slowly decaying modes of the Liouvillian, either by selecting an appropriate initial state or employing the strategies discussed in Refs.~\cite{Carollo2021, Kochsiek2022}, the speed of evolution can be enhanced. If only the fastest decaying mode with a real eigenvalue contributes to the dynamics, the system approaches the steady state at the maximum possible speed. Moreover, we can express $T_{QSL}=\Theta(\rho_0, \rho_T)/\langle\!\langle \Delta \mathcal{L} \rangle\!\rangle_{T}$ (lower bound in Eq.~\eqref{LMT}) in terms of the eigenmodes of Liouvillian, which suggests that the minimal time scale of a system is dictated by the eigenmodes of Liouvillian. For details see Appendix~\ref{Exp}.

For a dissipative process, the speed of evolution of a quantum system is time-dependent, even for a time-independent Liouvillian, because the Liouvillian is generally non-Hermitian. Consequently, under such a process, the norm is not preserved, i.e., \(\brad{\rho_{0}}e^{\mathcal{L}^\dagger t}e^{\mathcal{L} t}\ked{\rho_{0}} \neq \bradked{\rho_{0}}{\rho_{0}}\). Hence, under a dissipative process, quantum systems accelerate or decelerate over time. Therefore, to estimate the lower bound in Theorem~\ref{MT}, we typically need to solve for the dynamics.  However, we can use the inequality, \(\Delta \mathcal{L} \leq \|\mathcal{L}\|_{\text{op}}\) (where, $\|\mathcal{L}\|_{\text{op}}$ is the square root of the largest eigenvalue of $\mathcal{L}^{\dagger}\mathcal{L}$) to obtain the following lower bound on evolution time: \(T \geq {\Theta(\rho_0, \rho_T)}/{\langle\!\langle \norm{\mathcal{L}}_{\text{op}} \rangle\!\rangle_{T}}\), where $\langle\!\langle X \rangle\!\rangle_{T}$ is the time-averaged $X$. Estimating the above bound does not require solving for the dynamics. Moreover, for a time-independent Liouvillian, we have $\langle\!\langle \norm{\mathcal{L}}_{\text{op}} \rangle\!\rangle_{T}=\norm{\mathcal{L}}_{\text{op}}$, thus the above bound can be expressed as
\(T \geq \Theta(\rho_0, \rho_T)/\norm{\mathcal{L}}_{\text{op}}\), where $\norm{\mathcal{L}}_{\text{op}}=\sqrt{\lambda_{max}}$ and $\lambda_{max}$ is the largest eigenvalue of $\mathcal{L}^{\dagger}\mathcal{L}$. For CPTP dynamics generated by time-independent Liouvillian, the eigenvalue $\lambda_{max}$ corresponds to the fastest decaying eigenmode because the absolute value of its real part, $\abs{\Re{\lambda_{max}}}$, is the largest of all the eigenvalues of Liouvillian (see Appendix~\ref{Exp} for details). Hence, this bound provides a sensible minimal timescale for an open system to reach a steady state if $\rho_{T}=\rho_{ss}$. Important to note that this bound is computationally costly as it requires to diagonalize the Liouvillian to find its eigenvalues. However, using the inequality ${\norm{\mathcal{L}}_{\text{op}}} \leq {\norm{\mathcal{L}}_{\text{HS}}}={\sqrt{\tr{(\mathcal{L}^\dagger\mathcal{L})}}}$, we obtain the following bound 
\begin{equation}
T \geq \frac{\Theta(\rho_0, \rho_T)}{\sqrt{\tr{(\mathcal{L}^\dagger\mathcal{L})}}},
\end{equation}
which is arguably computationally cheaper than previous bounds as well as do not require to solve the dynamics.

\begin{corollary*}
Any CPTP dynamics generated by a time-independent Liouvillian $\mathcal{L}$ will saturate the bound in Eq.~\eqref{LMT} if the time-evolved state satisfies the following equation:
\begin{equation}\label{Line}
\rho_t = e^{\mathcal{L}t}(\rho_0) = P_t\rho_0 + (1 - P_t)\rho_0^{\perp},
\end{equation}
where $\rho_0$ and $\rho_0^\perp$ denote the initial state and its orthogonal complement (fixed in time), respectively, and $P_t=\frac{\tr(\rho_0\rho_t)}{\tr(\rho_0^2)}$ is the relative purity between the initial and time-evolved states.
\end{corollary*}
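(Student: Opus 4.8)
The plan is to read off the saturation condition from the geometric form of Theorem~\ref{MT}. The bound~\ref{LMT} is obtained by integrating $dS/dt=\Delta\mathcal{L}$ and using $S\geq\Theta(\rho_0,\rho_T)$, where $S=\int_0^T\Delta\mathcal{L}\,dt$ is the length of the curve traced by $\ked{\tilde{\rho}_t}$ in Liouville space and $\Theta$ is the length of the geodesic (great-circle arc) joining the endpoints. Equivalently, writing $\theta(t)=\arccos\bradked{\tilde{\rho}_0}{\tilde{\rho}_t}$, the proof reduces the uncertainty relation~\eqref{uncertreln} to $|\dot\theta|\leq\Delta\mathcal{L}$ and then applies $|\int\dot\theta\,dt|\leq\int|\dot\theta|\,dt$. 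Hence saturation of~\ref{LMT} requires two equalities at once: (i) $|\dot\theta(t)|=\Delta\mathcal{L}$ for all $t$, i.e. the velocity of $\ked{\tilde{\rho}_t}$ is entirely directed along the geodesic through $\ked{\tilde{\rho}_0}$, and (ii) $\dot\theta(t)$ is sign-definite, so the arc is traversed monotonically. The task is to show the ansatz~\eqref{Line} enforces both.

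First I would recast~\eqref{Line} at the level of Liouville vectors as $\ked{\rho_t}=P_t\ked{\rho_0}+(1-P_t)\ked{\rho_0^\perp}$, and observe that $\tr(\rho_0\rho_0^\perp)=0$ means $\bradked{\rho_0}{\rho_0^\perp}=0$. Thus, for every $t$, $\ked{\rho_t}$ lies in the fixed two-dimensional real subspace $V=\mathrm{span}\{\ked{\rho_0},\ked{\rho_0^\perp}\}$; after normalization $\ked{\tilde{\rho}_t}$ lies on the unit circle of $V$, which is exactly a great circle of the Liouville-space sphere passing through $\ked{\tilde{\rho}_0}$. Since this great circle is a geodesic and contains the initial point, the velocity of $\ked{\tilde{\rho}_t}$ is automatically tangent to the geodesic through $\ked{\tilde{\rho}_0}$; this is precisely equality condition (i), and it is equivalent to the Cauchy--Schwarz step in~\eqref{uncertreln} becoming tight because $(\mathcal{L}-\langle\mathcal{L}\rangle)\ked{\tilde{\rho}_t}$ becomes collinear with $(\mathcal{P}_0-\langle\mathcal{P}_0\rangle)\ked{\tilde{\rho}_t}$. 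A short computation using orthogonality, $\norm{\ked{\rho_t}}^2=P_t^2\tr(\rho_0^2)+(1-P_t)^2\tr((\rho_0^\perp)^2)$ and $\bradked{\tilde{\rho}_0}{\tilde{\rho}_t}=P_t\sqrt{\tr(\rho_0^2)}/\norm{\ked{\rho_t}}$, then shows $\theta(t)$ is a strictly monotone function of $P_t$.

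The remaining, and genuinely nontrivial, point is condition (ii): monotone traversal of the arc, equivalently sign-definiteness of $\dot\theta$, equivalently $|\int\dot\theta\,dt|=\int|\dot\theta|\,dt$. The planar ansatz guarantees that the state never leaves the great circle, but it does not by itself forbid $P_t$ from turning around; if $P_t$ were to oscillate, the state would backtrack along the geodesic and give $S>\Theta$, spoiling saturation. I would therefore verify that $P_t$ is monotonic along~\eqref{Line} --- as it is for a genuine relaxation driven by $e^{\mathcal{L}t}$ --- so that $\dot\theta$ keeps a fixed sign and every inequality in the proof of Theorem~\ref{MT} collapses to an equality, yielding $T=\Theta(\rho_0,\rho_T)/\langle\!\langle\Delta\mathcal{L}\rangle\!\rangle_{T}$. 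This monotonicity requirement is the main obstacle: geodesic confinement comes for free from the orthogonality of $\rho_0^\perp$, whereas excluding backtracking is where the actual argument lies.
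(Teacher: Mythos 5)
Your proposal is correct and, notably, argues in the direction the corollary actually asserts, but by a genuinely different route from the paper. The paper's proof in Appendix~\ref{appendixD} is algebraic: it writes the two inequalities of Eq.~\eqref{equ:rateofdis} in terms of the vectors $\ked{f}=\left(\mathcal{P}_0-\tr(\mathcal{P}_0\mathcal{P}_t)\right)\ked{\tilde{\rho}_t}$ and $\ked{g}=\left(\mathcal{L}-\tr(\mathcal{L}\mathcal{P}_t)\right)\ked{\tilde{\rho}_t}$, imposes their simultaneous equality conditions ($\ked{g}=a\ked{f}$ with $a\in\mathbb{R}$), and then shows via the Aharonov--Vaidman identity that this forces $\ked{\tilde{\rho}^{\perp}_{0}}$ to be constant in time and the trajectory to take the planar form \eqref{geodesic}. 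That is, the paper derives Eq.~\eqref{Line} \emph{from} saturation, obtaining along the way a characterization of all saturating dynamics, and only then reads off the corollary. You instead start from Eq.~\eqref{Line} and argue geometrically that the normalized trajectory is confined to the great circle through $\ked{\tilde{\rho}_0}$ determined by $\mathrm{span}\{\ked{\rho_0},\ked{\rho_0^{\perp}}\}$, so all of the speed $\Delta\mathcal{L}$ goes into changing $\theta$. The two proofs meet at the same point: your collinearity remark, that $\left(\mathcal{L}-\tr(\mathcal{L}\mathcal{P}_t)\right)\ked{\tilde{\rho}_t}$ becomes proportional to $\left(\mathcal{P}_0-\tr(\mathcal{P}_0\mathcal{P}_t)\right)\ked{\tilde{\rho}_t}$ with a \emph{real} ratio, is exactly the paper's equality condition. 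Your route is the more economical one for the stated ``if'' direction; the paper's route buys the converse characterization as a by-product.

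Regarding the step you call the main obstacle --- sign-definiteness of $\dot{\theta}$, i.e.\ monotonicity of $P_t$ --- you should know the paper fares no better: after Eq.~\eqref{geodesic} it simply \emph{states} that saturating bound~\ref{LMT} additionally requires $P_t$ to be monotonically decreasing (because of the $\abs{\int f\,dt}\leq\int\abs{f}\,dt$ step), and never derives this from Eq.~\eqref{Line}. So your honest flagging leaves you on equal footing with the published proof, not behind it. If you want to actually close the gap, the hypothesis to exploit is the one you never used: time-independence of $\mathcal{L}$. Linearity of $e^{\mathcal{L}s}$ and the semigroup law applied to Eq.~\eqref{Line}, together with linear independence of $\rho_0$ and $\rho_0^{\perp}$ and symmetry in $t\leftrightarrow s$, force the functional equation $P_{t+s}=P_tP_s+c\,(1-P_t)(1-P_s)$ for some constant $c$; the substitution $R_t=(1+c)P_t-c$ turns this into $R_{t+s}=R_tR_s$ with $R_0=1$, whose continuous solutions are $R_t=e^{kt}$, so $P_t=(e^{kt}+c)/(1+c)$ is automatically monotone (the degenerate case $c=-1$ gives an affine, hence still monotone, $P_t$). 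Backtracking along the geodesic is therefore impossible for a genuine one-parameter semigroup, and the corollary as stated becomes self-contained; neither your proposal nor the paper spells out these two lines.
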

The proof is provided in Appendix~\ref{appendixD}. The vectorized form of Eq.~\eqref{Line} represents the geodesic equation in Liouville space. It says that the time-evolved state $\ked{\rho_t}$ traces a line in the two-dimensional subspace spanned by $\{\ked{\rho_0},\ked{\rho^{\perp}_0}\}$. Moreover, the relation $S = \int_{0}^{T} dt \, \Delta \mathcal{L}$ makes it evident that the evolution time \( T \) and the length of the total path \( S \) are monotonic functions of each other. Hence, when the system evolves through the shortest path (geodesic), it naturally takes minimal time~\cite{Mostafazadeh2009}. Therefore, the CPTP dynamics that saturates the bound in Eq.~\eqref{LMT} is also time-optimal. In general, the Liouvillian that generates the time-optimal CPTP dynamics given by~Eq.~\eqref{Line}, can be written as: 
\begin{equation} \label{dynsat}
\mathcal{L}=\gamma\left(U^{*}\otimes U-\mathbb{I}\otimes\mathbb{I}\right),
\end{equation}
 where $\gamma$ is Weiskopf-Wigner decay constant and $U$ is a time-independent unitary such that $\rho_0^{\perp}=U\rho_0U^{\dagger}$ (see Appendix~\ref{appendixD}). Since $U$ is not unique, there can be a family of Liouvillian that generates the same optimal CPTP dynamics for the considered initial and final states. Moreover, we observe that the CPTP dynamics that traces the geodesic cannot be unitary as it changes the purity of state with time. \\

While the bound in Eq.~\eqref{LMT} saturates for the optimal CPTP dynamics that transports the state of a quantum system along the geodesic, there are instances where estimating the exact duration of a physical process is more crucial than knowing just the lower bound. Hence, next we derive the exact quantum speed limit for CPTP dynamics.

\begin{theorem}\label{ExactMT}
For a $d$-dimensional quantum system, the exact time and a lower bound on the exact time required to evolve a given state $\rho_0$ to a final state $\rho_T$ under a CPTP dynamics generated by Liouvillian $\mathcal{L}$ are given by
\begin{equation}\label{EMT}
T= \frac{ l(\rho_0,\rho_{T})}{\langle\!\langle\Delta\mathcal{L}_{\rm nc}\rangle\!\rangle_{T}} \geq \frac{ \Theta(\rho_0,\rho_{T})}{\langle\!\langle\Delta\mathcal{L}_{\rm nc}\rangle\!\rangle_{T}},
\end{equation}
when $\mathcal{L}_{\rm nc}$ is the non-classical part of $\mathcal{L}$ and is computed using the basis 
$\{\ked{a_{i}}\}$ (where the initial state vector $\ked{\tilde{\rho}_0}$ is included in this basis). And using the expansion $\ked{\tilde{\rho}_t} = \sum_{i=0}^{d^2-1} c_{i}\ked{a_{i}}$, we can define the length of the path traced out by the real vector $\ked{\sigma_t}=\sum_{i=0}^{d^{2}-1}\vert c_i\vert\ked{a_i}$ in Liouville space as $l(\rho_0,\rho_T)=\int_{0}^{T}\sqrt{\bradked{\dot{\sigma}_t}{\dot{\sigma}_t}}dt$ where $\ked{\dot{\sigma}_t}= \frac{d}{dt}\ked{\sigma_t}$.
\end{theorem}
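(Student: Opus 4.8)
The plan is to realize the ``analogous Wootters length'' $l(\rho_0,\rho_T)$ as the time integral of $\Delta\mathcal{L}_{\rm nc}$ by identifying the instantaneous speed of the real vector $\ked{\sigma_t}$ with the non-classical variance, using the exact uncertainty relation of Proposition~\ref{EUN}. First I would fix a Hermitian superoperator $\mathcal{A}$ whose eigenbasis $\{\ked{a_i}\}$ contains $\ked{\tilde\rho_0}$, say $\ked{a_0}=\ked{\tilde\rho_0}$, and expand the normalized evolved state as $\ked{\tilde\rho_t}=\sum_i c_i\ked{a_i}$ with $c_i=\vert c_i\vert e^{i\phi_i}$. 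Projecting the master equation~\eqref{EOM} onto $\brad{a_i}$ gives $\dot c_i=\brad{a_i}\mathcal{L}\ked{\tilde\rho_t}-\mathrm{Re}(\tilde\rho_t|\mathcal{L}|\tilde\rho_t)\,c_i$, whence $\tfrac{d}{dt}\vert c_i\vert=\mathrm{Re}(\dot c_i\,\overline{c_i})/\vert c_i\vert$, so that the squared speed of the real vector reads $\bradked{\dot\sigma_t}{\dot\sigma_t}=\sum_i\big(\tfrac{d}{dt}\vert c_i\vert\big)^2$.

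The central step is to match this to the quantities in Proposition~\ref{EUN} with $\mathcal{B}=\mathcal{L}$ and $\mathcal{P}=\mathcal{P}_t$. Using $\brad{a_i}\mathcal{P}_t\ked{a_i}=\vert c_i\vert^2$ and $\tr[\mathcal{P}_t(\mathcal{L}+\mathcal{L}^\dagger)]=2\,\mathrm{Re}(\tilde\rho_t|\mathcal{L}|\tilde\rho_t)$, I would compute the diagonal elements of $\mathcal{G}=\mathcal{L}\mathcal{P}_t+\mathcal{P}_t\mathcal{L}^\dagger-\mathcal{P}_t\tr[\mathcal{P}_t(\mathcal{L}+\mathcal{L}^\dagger)]$ and establish the key identity $\brad{a_i}\mathcal{G}\ked{a_i}=2\,\mathrm{Re}(\dot c_i\,\overline{c_i})=2\vert c_i\vert\,\tfrac{d}{dt}\vert c_i\vert$. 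Feeding this into the definition of $\delta_{\mathcal{L}}\mathcal{A}$ collapses the sum to $(\delta_{\mathcal{L}}\mathcal{A})^{-2}=\sum_i 4\,\mathrm{Re}(\dot c_i\,\overline{c_i})^2/\vert c_i\vert^2=4\,\bradked{\dot\sigma_t}{\dot\sigma_t}$, and the exact relation $\delta_{\mathcal{L}}\mathcal{A}\,\Delta\mathcal{L}_{\rm nc}=\tfrac12$ then gives $\Delta\mathcal{L}_{\rm nc}=\sqrt{\bradked{\dot\sigma_t}{\dot\sigma_t}}$: the non-classical variance is exactly the instantaneous speed of $\ked{\sigma_t}$.

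With this pointwise identity the equality follows by integration, $l(\rho_0,\rho_T)=\int_0^T\sqrt{\bradked{\dot\sigma_t}{\dot\sigma_t}}\,dt=\int_0^T\Delta\mathcal{L}_{\rm nc}\,dt=T\,\langle\!\langle\Delta\mathcal{L}_{\rm nc}\rangle\!\rangle_{T}$, i.e. $T=l(\rho_0,\rho_T)/\langle\!\langle\Delta\mathcal{L}_{\rm nc}\rangle\!\rangle_{T}$. For the inequality I would observe that $\ked{\sigma_t}$ is a unit vector, since $\sum_i\vert c_i\vert^2=\bradked{\tilde\rho_t}{\tilde\rho_t}=1$, so its arc length is at least the geodesic distance $\arccos\bradked{\sigma_0}{\sigma_T}$ between its endpoints. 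Because $\ked{a_0}=\ked{\tilde\rho_0}$ forces $\ked{\sigma_0}=\ked{\tilde\rho_0}$ and $\bradked{\sigma_0}{\sigma_T}=\vert c_0(T)\vert=\vert\bradked{\tilde\rho_0}{\tilde\rho_T}\vert=\bradked{\tilde\rho_0}{\tilde\rho_T}$ (the overlap $\tr(\tilde\rho_0\tilde\rho_T)$ being real and nonnegative), this geodesic distance equals $\Theta(\rho_0,\rho_T)$, so $l\ge\Theta$ and the stated bound follows.

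I expect the main obstacle to be the bookkeeping in the central step, namely verifying $\brad{a_i}\mathcal{G}\ked{a_i}=2\,\mathrm{Re}(\dot c_i\,\overline{c_i})$: one must carefully expand $\mathcal{G}$ in the $\{\ked{a_i}\}$ basis and reconcile its subtraction term with the normalization term of the master equation, confirming that the classical part $\mathcal{L}_{\rm cl}$ drops out because, being diagonal in $\{\ked{a_i}\}$, it commutes with $\mathcal{A}$ and rotates only the phases $\phi_i$ without changing the moduli $\vert c_i\vert$. It then remains to check that the sole structural input needed is that $\ked{\tilde\rho_0}$ is an eigenvector of $\mathcal{A}$ (pinning the endpoint overlap to $\Theta$), and that every step holds verbatim for a time-dependent $\mathcal{L}$ with the fixed basis $\{\ked{a_i}\}$, which completes the argument.
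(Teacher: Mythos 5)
Your proposal is correct and takes essentially the same route as the paper's proof: apply the exact uncertainty relation with $\mathcal{B}=\mathcal{L}$ and a Hermitian $\mathcal{A}$ whose eigenbasis contains $\ked{\tilde{\rho}_0}$, identify $\mathcal{G}$ with $\tfrac{d}{dt}\mathcal{P}_t$ so that $(\delta_{\mathcal{L}}\mathcal{A})^{-2}=4\bradked{\dot{\sigma}_t}{\dot{\sigma}_t}$ and hence $\Delta\mathcal{L}_{\rm nc}=\sqrt{\bradked{\dot{\sigma}_t}{\dot{\sigma}_t}}$, integrate to get the equality, and invoke the geodesic bound $l\geq\Theta$. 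Your direct verification of $\brad{a_i}\mathcal{G}\ked{a_i}=2\,\mathrm{Re}(\dot{c}_i\overline{c_i})$ and your explicit endpoint argument (that $\bradked{\sigma_0}{\sigma_T}=\bradked{\tilde{\rho}_0}{\tilde{\rho}_T}\geq 0$ pins the geodesic distance to $\Theta(\rho_0,\rho_T)$) are simply more detailed renderings of steps the paper states tersely, not a different method.
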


Considering $\mathcal{A} = \mathcal{P}_0 = \kedbrad{\tilde{\rho}_0}{\tilde{\rho}_0}$ and $\mathcal{B} = \mathcal{L}$ in the exact uncertainty relation given in Proposition~\ref{EUN} and using Eq.~\eqref{EOM}, we can prove the Theorem~\ref{ExactMT} (see Appendix~\ref{appendixE}). In Theorem~\ref{ExactMT}, $l(\rho_0, \rho_T)$ can be regarded as the analog of the Wootters distance in Liouville space~\cite{Wootters1981} and is lower bounded by the geodesic distance, $l(\rho_0, \rho_T) \geq \Theta(\rho_0, \rho_T)$. This inequality sets a lower bound in Eq.~\eqref{EMT}. Interestingly, when the evolution is confined to a two-dimensional Liouville space and if the relative purity $P_t$ monotonically decreases, then $l(\rho_0, \rho_T) = \Theta(\rho_0, \rho_T)$, leading to the saturation of the inequality in Eq.~\eqref{EMT}. Moreover, $\Delta\mathcal{L}_{\rm nc}$ can be regarded as the refined speed of evolution because it is obtained by removing the classical uncertainty $\Delta\mathcal{L}_{\rm cl}$ from the evolution speed. This is due to the fact that the classical part of $\mathcal{L}$, i.e., $\mathcal{L}_{\rm cl}$ does not contribute to the rate of change of distinguishability between the initial and time-evolved states as it commutes with $\mathcal{P}_0$. Thus we can discard $\mathcal{L}_{\rm cl}$ and only consider the non-classical part of $\mathcal{L}$, i.e., $\mathcal{L}_{\rm nc}$. Note that $\mathcal{L}_{\rm cl}$ and $\mathcal{L}_{\rm nc}$ are computed using Proposition~\ref{EUN}, and $\mathcal{L}_{\rm cl}$ is always anti-Hermition; thus classical uncertainty can only be removed from the variance of the anti-Hermitian part of the Liouvillian. The refined speed of evolution can also be decomposed into unitary and dissipation speeds or reversible and irreversible speeds, similar to Eq.~\eqref{speed}. Thus, bound in Eq.~\eqref{EMT} can be regarded as the exact time-energy-dissipation uncertainty relation or the exact generalization of the MT bound. The lower bound presented in Theorem~\ref{ExactMT} is tighter than the bound of Theorem~\ref{MT} because  $\Delta\mathcal{L}_{\rm nc}^{2} \leq  \Delta\mathcal{L}_{\rm nc}^{2} +  \Delta\mathcal{L}_{\rm cl}^{2}  = \Delta\mathcal{L}^{2}$. Interestingly, the inequality in Eq.~\eqref{EMT} can also be obtained using similar methods as in the proof of Theorem~\ref{MT} (see Appendix~\ref{appendixE}). Note that to obtain tighter QSLs, we normalize the state by its purity and exclude the classical part from the generator of the dynamics.

Next, we present two important applications of our speed limits.

\begin{figure*}[htp]
\centering
\includegraphics[width=5.3cm]{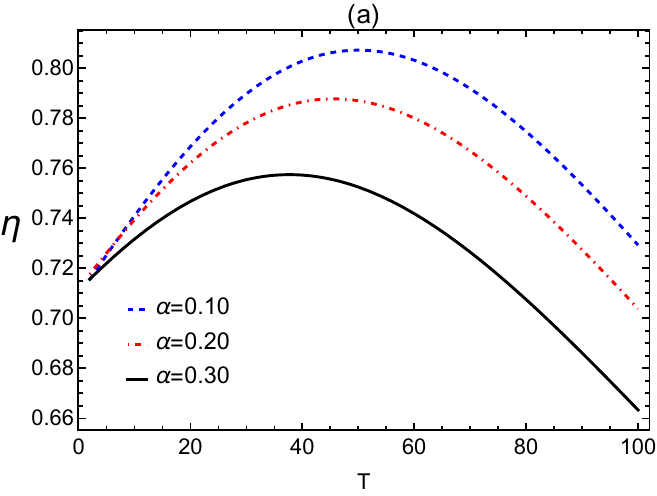}
\space
\space
\space
\includegraphics[width=5.3cm]{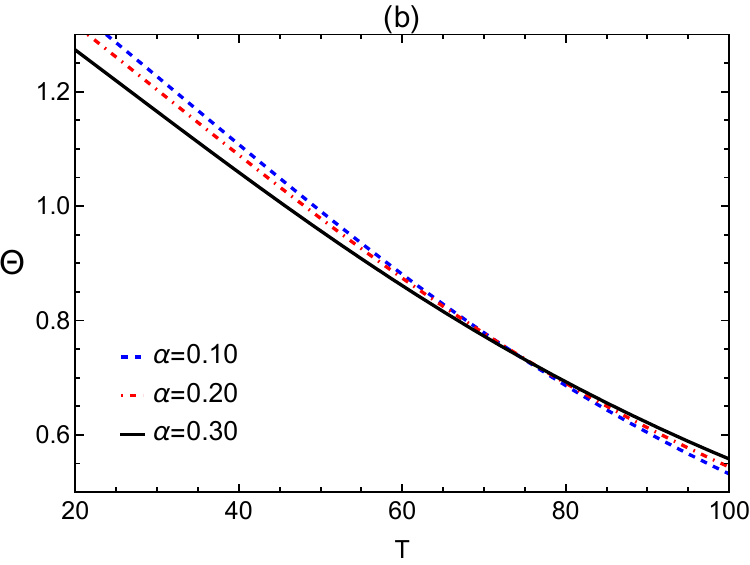}
\space
\space
\space
\includegraphics[width=5.5cm]{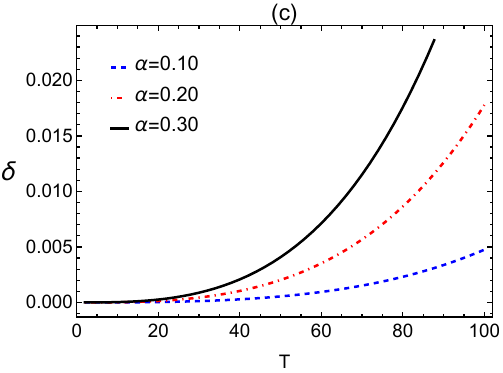}
 \caption{{\bf Mpemba effect in a qubit system under the action of thermal environment.} The computations are performed for various values of $\alpha$ (which corresponds to different initial states) and the parameter $\gamma = 0.01$. The operator-norm of Liouvillian is $\norm{\cal{L}}_{\rm op}= \sqrt{2}\gamma$. (a) Speed efficiency $\eta$ for different initial states. (b) Distance between different non-equilibrium states and the steady state, i.e., $\Theta(\rho_{T},\rho_{ss})$. (c) Tightness of the bound given in Eq.~\eqref{LMT} for different initial states.}
\label{Mpemba}
\end{figure*}
\textbf{\it  Relation between spectral form factor and Krylov complexity of states, and speed of evolution:---} The spectral form factor (SFF) and Krylov complexity of states are two key quantities to study information scrambling and quantum chaos (see Refs.~\cite{Parker2019, Balasubramanian2022, Xu2021, Azcona_2022, Roubeas2023, Alishahiha2023, Caputa2024, Nandy2024}). 
For a system under the action of a CPTP map $\mathcal{E}_t$, the spectral form factor is defined as ${\rm SFF}(t) := \tr \left( \rho_{\beta}\mathcal{E}_{t}(\rho_\beta)\right) $, where $\rho_\beta=\ketbra{\psi_{\beta}}{\psi_{\beta}}$, $\ket{\psi_{\beta}} = Z^{-\frac{1}{2}}_{\beta}\sum_{n}e^{\frac{-\beta E_{n}}{2}}\ket{n}$ is the coherent Gibbs state, $E_n$ is the eigenvalue of the system Hamiltonian, and $Z_{\beta} = \sum_{n} e^{-\beta E_{n}}$. Using the bound in Eq.~\eqref{EMT}, we obtain the following relation
\begin{equation}\label{SFFB}
  \arccos(\frac{{\rm SFF}(T)}{\sqrt{\tr \left[\mathcal{E}_{T}(\rho_\beta)\right]^2}}) \leq \int_{0}^{T}\, dt \Delta \mathcal{L}_{\rm nc}.
\end{equation}
The time evolution of a density operator under unitary dynamics can be written in the Krylov space as $\ked{\rho_t}=\sum_n i^n \phi_n(t) \ked{K_n}$, where $\{\ked{K_n}\}$ form the Krylov basis, $\phi_n(t)$ is the transition amplitude satisfying $\sum_n\abs{\phi_n(t)}^2=1$, and $\phi_n(0)=\delta_{n0}$. The growth of the density operator is quantified by the Krylov complexity~\cite{Parker2019, Balasubramanian2022, Alishahiha2023, Caputa2024, Nandy2024},
$ C_{K}(t) :=\tr(\mathcal{P}_t\mathcal{S})= \sum_n n\abs{\phi_{n}(t)}^2$,
where $\mathcal{P}_t=\kedbrad{\rho_t}{\rho_t}/\tr(\rho_0^2)$ and $\mathcal{S}=\sum_n n \kedbrad{K_n}{K_n}$. Using the bound in Eq. \eqref{EMT} and the inequality $\frac{ C^2_{K}(t)}{4 \norm{\mathcal{S}}^2_{\rm op}}\leq 1 - \frac{\bradked{\rho_0}{\rho_t}^2}{(\tr \rho^2_0)^2}$ (see Appendix~\ref{appendixF}), we obtain
\begin{align}\label{CB}
   \arcsin( \frac{ C_{K}(T)}{2 \norm{\mathcal{S}}_{\rm op}}) \leq  \int_{0}^{T}\, dt \Delta \mathcal{L}_{\rm nc}.
\end{align}
Eq.~\eqref{SFFB} and Eq.~\eqref{CB} suggest that the growth of SFF and Krylov complexity of states are constrained by the speed of evolution of the quantum system in Liouville space. Furthermore, we can replace $\Delta \mathcal{L}_{\rm nc}$ by $\Delta \mathcal{L}$ in Eq.~\eqref{SFFB} and Eq.~\eqref{CB} to obtain loose bounds, however, $\Delta \mathcal{L}$ is time-independent in the case of unitary dynamics.
Moreover, for $\rho_0=\rho_\beta$ and unitary dynamics, the inequality mentioned above Eq.~\eqref{CB} provides a trade-off relationship between SFF and Krylov complexity of states:
\begin{equation}
\frac{ C^2_{K}(T)}{4 \norm{\mathcal{S}}^2_{\rm op}} + {\rm SFF^2}(T)\leq 1.
\end{equation}

\textbf{\it Explanation of Mpemba effect in non-equilibrium open quantum dynamics:---} The Mpemba effect says that far-from-equilibrium states may relax faster than states closer to equilibrium. Recently, it has been extensively studied in non-equilibrium open quantum dynamics~\cite{Chatterjee2023, Chatterjee2023a, Rylands2023, Strachan2024, Nava2024, Moroder2024}. Here, we show that the Mpemba effect can be understood using the speed efficiency. The speed efficiency is defined as 
\begin{equation}\label{speff}
    0 \leq \eta := \frac{\langle\!\langle\Delta \mathcal{L}\rangle\! \rangle_{T}}{\norm{\cal{L}}_{\rm op}} \leq 1.
\end{equation}
The above definition is based on the tightness of the inequality $\Delta \cal{L}\leq \norm{\cal{L}}_{\rm op}$, where the upper bound $\norm{\cal{L}}_{\rm op}$ can be interpreted as the maximal possible speed of evolution. Note that similar speed efficiency was defined for unitary evolution in Refs.~\cite{Uzdin2012, Rossetti2025}. In general, the relaxation timescale of a state depends on its speed efficiency, as higher speed efficiency implies larger Liouvillian fluctuations and consequently a faster evolution toward equilibrium. Therefore, a far-from-equilibrium state with higher speed efficiency can relax faster than a state closer to equilibrium but with lower speed efficiency. For example, consider two states $\rho_0'$ and $\rho_0''$ such that $\Theta(\rho_0', \rho_T) < \Theta(\rho_0'', \rho_T) $. If $\eta_{\rho_0'}< \eta_{\rho_0''}$, then even though $\rho_0''$ is far away from equilibrium compared to $\rho_0'$, it relaxes much faster. Moreover, if $\eta_{\rho_0''}$ is sufficiently larger that $\eta_{\rho_0'}$, then the minimal relaxation time scale may satisfy: $T_{QSL}''< T_{QSL}'$. It is important to note that for faster evolution the bound given in Eq.~\eqref{LMT} is likely to saturate.

In previous works, Mpemba effect was demonstrated using the eigenmodes of Liouvillion. Since, the minimal relaxation time-scale $T_{QSL}$ (given in Eq.~\eqref{LMT} ) is dictated by the Liouvillian eigenvalues and the overlap of initial state with left eigenvectors of Liouvillian (see Appendix~\ref{Exp}), this suggests a fundamental connection between the Mpemba effect and quantum speed limits.

Let us consider a qubit system interacting with a thermal bath at zero temperature. The dynamics of such a qubit system (bare Hamiltonian \(H=-\frac{\omega}{2}\sigma_z\)) is described by the following Lindblad master equation~\cite{Breuer2007,lidar2020} $ d\rho_{t}/{dt} = \gamma\left(\sigma_{-}\rho_t \sigma_{+}- \frac{1}{2}\{\sigma_{+}\sigma_{-},\rho_t\}\right),$ where $\sigma_{-} = \ketbra{0}{1}$ and $\sigma_{+}=\ketbra{1}{0}$ are the lowering and raising operators, and $\gamma$ is Weiskopf-Wigner decay constant. Let the initial state of the system be $\rho_0 = \ketbra{\psi}{\psi}$, where $\ket{\psi} = \alpha \ket{0}+\sqrt{1-\alpha^2}\ket{1}$. The state of the system at time $t$ is $\rho_t = (1+\left(\alpha ^2-1\right) e^{-t \gamma } )\ketbra{0}{0}+  e^{-\frac{1}{2}t \gamma} \alpha  \sqrt{1-\alpha ^2} (\ketbra{0}{1}+ \ketbra{1}{0})+ \left(1-\alpha ^2\right)e^{-t \gamma } \ketbra{1}{1}$. The steady state of the system is $\rho_{ss}=\ketbra{0}~(\alpha=1)$. The distance between an arbitrary pure initial state and the steady state is $\Theta(\rho_0,\rho_{ss})=\arccos(\alpha^2)$, which is a monotonically decreasing function of $\alpha$.  It is evident from Fig-\ref{Mpemba}(a) that for considered far away non-equilibrium states (small $\alpha$), the speed efficiency is higher than the closer non-equilibrium states (large $\alpha$). 
As a result, the distance $\Theta(\rho_{T},\rho_{ss})$ between the far away non-equilibrium states and the steady state decreases much faster than that between the closer non-equilibrium states and the steady state, as shown in Fig.-\ref{Mpemba}(b), which is the so-called Mpemba effect. Moreover, the bound given in Eq.~\eqref{LMT} is tighter for states with higher speed efficiency. This is quantified by the offset $\delta:=T-T_{QSL}$, where $T_{QSL}$ is the lower bound in Eq.~\eqref{LMT}, and is shown in Fig.-\ref{Mpemba}(c). For the mathematical expressions of the quantities needed to depict Fig.-\ref{Mpemba}, see Appendix~\ref{appendixG}, which also provides a demonstration of the Mpemba effect in the presence of a finite temperature bath.

{\it Discussion on the comparison with previously derived geometric speed limits:--} The previously derived geometric speed limits for arbitrary dynamics saturate for the dynamics that follow the corresponding geodesic~\cite{Taddei2013, Pires2016, Campaioli2019, Funo2019, Connor2021, Nakajima2022, Lan2022, Mai2023,  Rosal2024, Mai2024}. However, finding the general class of dynamics, i.e., Kraus operators or Liouvillian, which connects given initial and target states along a geodesic, remains unknown. The bound in Theorem~\ref{MT} also saturates for the dynamics that take the states along the geodesic. However, since the bound in Theorem \ref{MT} does not require calculating the square root of density matrices, Fisher information, or norms of non-Hermitian operators, it is arguably easier to compute than the previous speed limits. Moreover, the dynamics that saturates the bound in Theorem \ref{MT} is known (see Eq.~\eqref{dynsat}). We have also given a description of our bound in terms of the eigenmodes of the Liouvillian which provides a connection between the relaxation time scale of a system and the eigenmodes of the Liouvillian. Furthermore, the bound in Theorem~\ref{MT} can be further tightened by removing the classical contribution from the speed of evolution, which yields the lower bound in Theorem~\ref{ExactMT}. The lower bound in Theorem~\ref{ExactMT} is tighter than the bound given in Theorem~\ref{MT} since it does not overestimate the speed of evolution. Finally, the exact speed limit in Theorem~\ref{ExactMT}, which is derived by removing the classical part from the evolution speed and replacing the geodesic distance with the Wootters distance, is exact for all states and all dynamics. Hence, considering the above reasons, our bounds naturally outperform all previously obtained speed limits.

\textbf{\it Conclusions:---} In this work, we have addressed the problem of the attainability and computational complexity of quantum speed limits for completely positive and trace-preserving (CPTP) dynamics. We have derived exact and inexact quantum speed limits for CPTP dynamics for finite-dimensional quantum systems in Liouville space. The exact and inexact speed limits are a consequence of the exact and inexact uncertainty relations for operators acting on Liouville space, respectively. The inexact speed limit saturates for time-optimal (fastest) CPTP dynamics, whereas the exact speed limit is exact for all CPTP dynamics and all states. We have characterized the time-optimal CPTP dynamics and obtained its Liouvillian for given initial and orthogonal final states. We have provided the geometrical interpretation of the exact and inexact speed limits using the geometry of quantum evolution in Liouville space. We found that only the non-classical part of the Liouvillian is responsible for changes in the distinguishability of the initial and time-evolved states, which leads to the refining of the speed of evolution. Moreover, we showed that the new bounds can be regarded as the exact and inexact generalization of the Mandelstam-Tamm bound as they provide novel form of time-energy-dissipation uncertainty relations. 

We have provided two important applications of our bounds. First, we showed that the speed of evolution bounds the growth of spectral form factor and Krylov complexity of states, which can have useful applications in many-body and high-energy physics. Second, using the geometry of quantum evolution in Liouville space, particularly the tightness of inexact speed limits, we have successfully explained the counter-intuitive phenomenon of the Mpemba effect in open quantum dynamics. Our speed limits significantly enhance our understanding of the relationship between time, energy, and dissipation in open quantum dynamics and reveal the novel properties of CPTP dynamics. Since the exact and inexact quantum speed limits have been derived from fundamentally well-established concepts such as uncertainty relations, non-commutativity, and the geometry of quantum evolution, we believe they can be verified in existing experimental setups, as various functionals of density matrices can be measured using the SWAP test~\cite{Ekert02}. Additionally, our speed limits are simpler to compute, requiring estimating only the overlap of density matrices and the variance of the Liouvillian. We expect that our speed limits may have numerous applications in the rapidly growing field of quantum information science and technologies. The ideas presented in this article open the door to finding the exact or tight speed limits for other physical quantities of interest in many-body physics~\cite{Vikram2024}, high-energy physics~\cite{Jha24}, quantum thermodynamics~\cite{Mohan2022,Bera2022}, and other diverse areas.

\medskip
\textbf{\it Acknowledgments:---}
BM acknowledges Manabendra Nath Bera for fruitful discussions. BM acknowledge funding by the Research Council of Finland by Grant No.~355824.

\onecolumngrid
\section*{Appendix}
\appendix

Here, we include detailed preliminaries, derivations, and calculations to supplement the results presented in the main text.

\section{Preliminaries, Notations, Inequalities, and Equalities Used in the Main Text}\label{appendixA}

{\it \bf Liouville space:} For any quantum system of dimension \(d\), there exists an associated Hilbert space \(\mathsf{H}_d\). Let \(\mathbb{B}(\mathsf{H}_d)\) denote the set of all linear operators acting on \(\mathsf{H}_d\). There is an isomorphism between \(\mathbb{B}(\mathsf{H}_d)\) and \(\mathsf{H}_d \otimes \mathsf{H}_d\). Using this isomorphism, we can vectorize an arbitrary linear operator \(A \in \mathbb{B}(\mathsf{H}_d)\) as follows:
\begin{equation*}
    A = \sum_{ij} \alpha_{ij}\ketbra{i}{j}~~\rightarrow~~\ked{A} = \sum_{ij} \alpha_{ij} \ket{j} \otimes \ket{i}= \sum_{ij} \alpha_{ij} \ked{ji},
\end{equation*}
where \(\{\ket{i}\}\) forms an orthonormal basis for \(\mathsf{H}_d\). The space of vectorized linear operators \(\{\ked{A}\}\) is called Liouville space (doubled Hilbert space), which is of dimension $d^2$~\cite{Gyamfi2020}. The Liouville space is equipped with an inner product given by \(\bradked{A}{B} := \tr(A^{\dag}B)\). The state of a quantum system is represented by a linear positive semidefinite operator \(\rho\) with a unit trace called the density matrix. The vectorized form of any density matrix \(\rho\) is given by
\[
\rho = \sum_{ij} \rho_{ij}\ketbra{i}{j} \quad \rightarrow \quad \ked{\rho} = \sum_{ij} \rho_{ij} \ked{ji}.
\]
Note that \(\bradked{\rho}{\rho} = \tr(\rho^2) \neq 1\) in general, hence we can define the corresponding normalized vector as \(\ked{\tilde{\rho}} = \frac{\ked{\rho}}{\sqrt{\tr(\rho^2)}}\).\\

{\it \bf Liouville space angle:} The shortest distance between two density matrices, \(\rho_a\) and \(\rho_b\), in Liouville space is defined as follows:
\begin{equation}
    \Theta(\rho_a, \rho_b) := \arccos{\left(\tilde{\rho}_a | \tilde{\rho}_b\right)}, \hspace{0.3cm}\text{with}\hspace{0.3cm}
(\tilde{\rho}_a | \tilde{\rho}_b) := \frac{\tr(\rho_a \rho_b)}{\sqrt{\tr(\rho_a^2) \tr(\rho_b^2)}},
\end{equation}
where $ \Theta(\rho_a, \rho_b)$ is the Liouville space angle between state vectors $\ked{\tilde{\rho}_a}$ and $\ked{\tilde{\rho}_b}$. \(\Theta(\rho_a, \rho_b)\) can be thought of as an analog of the Hilbert space angle between state vectors. It satisfies all the properties required for a valid distance measure:
\begin{enumerate}
    \item Positive semidefinite: \(\Theta(\rho_a, \rho_b) \geq 0\) for all \(\rho_a\) and \(\rho_b\).
    \item Symmetric: \(\Theta(\rho_a, \rho_b)=\Theta(\rho_p, \rho_a)\) for all \(\rho_a\) and \(\rho_b\).
    \item Triangle inequality: \(\Theta(\rho_a, \rho_b) \leq \Theta(\rho_a, \rho_c) + \Theta(\rho_c, \rho_b)\) for all \(\rho_a\), \(\rho_b\) and $\rho_c$.
\end{enumerate}

{\it \bf Space of operators acting on Liouville space}: The linear operators that act on the vectors in Liouville space are denoted by \(\mathcal{A}, \mathcal{B}, \ldots\), etc. These operators acting on Liouville space form a linear space equipped with an inner product $\langle \mathcal{A}, \mathcal{B} \rangle:= \tr\left(\mathcal{A}^{\dag}\mathcal{B}\right)$. 
The expectation value and the variance of operator \(\mathcal{O}\) in the state vector \(\ked{\tilde{\rho}}\) is defined as:
\begin{align*}
(\!(\mathcal{O})\!) &:= \brad{\tilde{\rho}}\mathcal{O}\ked{\tilde{\rho}} = \tr(\mathcal{O}\mathcal{P}) \quad \text{and} \quad (\Delta \mathcal{O})^2 = \tr(\mathcal{O}^\dagger \mathcal{O}\mathcal{P}) - \tr(\mathcal{O}^\dagger\mathcal{P}) \tr(\mathcal{O}\mathcal{P}),
\end{align*}
where \(\mathcal{P} = \kedbrad{\tilde{\rho}}{\tilde{\rho}}\) is the projection operator associated with the state vector \(\ked{\tilde{\rho}}\). \\

{\it \bf Uncertainty relation for non-Hermitian operators acting on Liouville space} (Proposition 1): For any two non-Hermitian operators \(\mathcal{A}\) and \(\mathcal{B}\), the following uncertainty relation holds:
\begin{equation} \label{uncertaintyrelnsuper}
    (\Delta \mathcal{A})^2 (\Delta \mathcal{B})^2 \geq \left| \tr(\mathcal{A}^\dagger \mathcal{B}\mathcal{P}) - \tr(\mathcal{A}^\dagger\mathcal{P}) \tr(\mathcal{B}\mathcal{P}) \right|^2,
\end{equation}
where \((\Delta \mathcal{O})^2 = \tr(\mathcal{O}^\dagger \mathcal{O}\mathcal{P}) - \tr(\mathcal{O}^\dagger\mathcal{P}) \tr(\mathcal{O}\mathcal{P})\) is the variance of the operator \(\mathcal{O}\) in the state vector \(\ked{\tilde{\rho}}\).
\begin{proof}
    Let us define two vectors in Liouville space as \(\ked{f} := (\mathcal{A} - \tr(\mathcal{A}\mathcal{P}))\ked{\tilde{\rho}}\) and \(\ked{g} := (\mathcal{B} - \tr(\mathcal{B}\mathcal{P}))\ked{\tilde{\rho}}\), where \(\mathcal{A}\) and \(\mathcal{B}\) are operators acting on Liouville space, and \(\ked{\tilde{\rho}}\) is a normalized vector in Liouville space. Note that the inner product between the vectors \(\ked{A}\) and \(\ked{B}\) in Liouville space is defined as \(\bradked{A}{B} := \tr{(A^{\dag}B)}\). Thus, the inequalities that hold for inner product spaces must also exist in Liouville space. Utilizing the Cauchy-Schwarz inequality, we obtain
    \begin{equation}
        \bradked{f}{f} \bradked{g}{g} \geq \abs{\bradked{f}{g}}^2.
    \end{equation}
    Now, by the definitions of \(\ked{f}\) and \(\ked{g}\), we have \(\bradked{f}{f} = \Delta \mathcal{A}^2\) and \(\bradked{g}{g} = \Delta \mathcal{B}^2\). Using these relations in the above inequality, we obtain \((\Delta \mathcal{A})^2 (\Delta \mathcal{B})^2 \geq \abs{\tr(\mathcal{A}^\dagger \mathcal{B}\mathcal{P}) - \tr(\mathcal{A}^\dagger\mathcal{P}) \tr(\mathcal{B}\mathcal{P})}^2\), which completes the proof.
\end{proof}
Important to note that in Eq.~\eqref{uncertaintyrelnsuper}, if we replace the operators acting on Liouville space $\mathcal{A}$, $\mathcal{B}$, $\mathcal{P}$ by operators acting on Hilbert space $A$, $B$, $\rho$ (density matrix), respectively, we retrieve the uncertainty relation for non-Hermitian operators obtained in the Ref.~\cite{Pati2015}, which has recently been experimentally verified~\cite{Zhao2024}.\\

{\it \bf Exact uncertainty relation for operators acting on Liouville space (Proposition 2):} For any Hermitian operator $\mathcal{A}=\sum_i a_{i} \kedbrad{a_{i}}{a_{i}}$ and a non-Hermitian operator $\mathcal{B}$ there exists an exact uncertainty relation for all state vectors  $\ked{\tilde{\rho}}$ as
     \begin{equation}\label{exactuncrln1}
         \delta_{\mathcal{B}}\mathcal{A} \Delta \mathcal{B}_{\rm nc}=1/2,
     \end{equation}
     where
    \begin{equation*}
    (\delta_{\mathcal{B}}\mathcal{A})^{-2}:=\sum_{i}\frac{ \brad{a_{i}}(\mathcal{B}\mathcal{P}+\mathcal{P} \mathcal{B}^\dagger-\mathcal{P}\tr[\mathcal{P}(\mathcal{B}+\mathcal{B}^\dagger)])\ked{a_{i}}^2}{(a_{i}|\mathcal{P}|a_{i})},~~\mathcal{B}=\mathcal{B}_{\rm cl}+\mathcal{B}_{\rm nc}~~\text{and}~~\mathcal{B}_{\rm cl}:=\sum_{i} \kedbrad{a_{i}}{a_{i}}\frac{(a_{i}|\frac{1}{2}(\mathcal{B}\mathcal{P}-\mathcal{P} \mathcal{B}^\dagger)|a_{i})}{(a_{i}|\mathcal{P}|a_{i})}.
\end{equation*}

\begin{proof}
Given a Hermitian operator $\mathcal{A}$, we can always decompose a non-Hermitian operator $\mathcal{B}$ as a sum of two operators as $\mathcal{B}=\mathcal{B}_{\rm cl}+\mathcal{B}_{\rm nc}$, where $\mathcal{B}_{\rm cl}$ is anti-Hermitian $(\mathcal{B}_{\rm cl}^\dagger=-\mathcal{B}_{\rm cl})$ and is diagonal in the eigenbasis of $\mathcal{A}$, and $\mathcal{B}_{\rm nc}$ is a non-Hermitian operator. The operator $\mathcal{B}_{\rm cl}$ is defined as
\begin{equation}\label{classicalop}
   \mathcal{B}_{\rm cl}:=\sum_{i} \kedbrad{a_{i}}{a_{i}}\frac{(a_{i}|\frac{1}{2}(\mathcal{B}\mathcal{P}-\mathcal{P} \mathcal{B}^\dagger)|a_{i})}{(a_{i}|\mathcal{P}|a_{i})},
\end{equation}
where $\{\ked{a_i}\}$ is the eigenbasis of $\mathcal{A}$. Then the variance of the non-classical part $\mathcal{B}_{\rm nc}$ in the state $\ked{\tilde{\rho}}$ is given by
\begin{eqnarray}\label{varBnc}
    \Delta \mathcal{B}_{\rm nc}^2&=&\Delta (\mathcal{B}-\mathcal{B}_{\rm cl})^2\nonumber\\
    &=&\tr((\mathcal{B}-\mathcal{B}_{\rm cl})^\dagger(\mathcal{B}-\mathcal{B}_{\rm cl})\mathcal{P})-\tr((\mathcal{B}-\mathcal{B}_{\rm cl})^\dagger\mathcal{P})\tr((\mathcal{B}-\mathcal{B}_{\rm cl})\mathcal{P}).
\end{eqnarray}
Let us consider the first term of the R.H.S. of the above equation
\begin{eqnarray}\label{firstterm}
    \tr((\mathcal{B}-\mathcal{B}_{\rm cl})^\dagger(\mathcal{B}-\mathcal{B}_{\rm cl})\mathcal{P})&=&\tr(\mathcal{B}^\dagger B\mathcal{P}-\mathcal{B}_{\rm cl}^\dagger B\mathcal{P}- \mathcal{B}^{\dagger}\mathcal{B}_{\rm cl}\mathcal{P}+\mathcal{B}_{\rm cl}^\dagger\mathcal{B}_{\rm cl}\mathcal{P})+\tr(\mathcal{B}_{\rm cl}^\dagger\mathcal{B}_{\rm cl}\mathcal{P})-\tr(\mathcal{B}_{\rm cl}^\dagger\mathcal{B}_{\rm cl}\mathcal{P})\nonumber\\
    &=&\tr(\mathcal{B}^\dagger B\mathcal{P})-\tr(\mathcal{B}_{\rm cl}^\dagger\mathcal{B}_{\rm cl}\mathcal{P})+2\tr(\mathcal{B}_{\rm cl}^\dagger\mathcal{B}_{\rm cl}\mathcal{P})-\tr(\mathcal{B}_{\rm cl}\mathcal{P}\mathcal{B}^\dagger+\mathcal{B}_{\rm cl}^\dagger\mathcal{B}\mathcal{P})\nonumber\\
    &=&\tr(\mathcal{B}^\dagger B\mathcal{P})-\tr(\mathcal{B}_{\rm cl}^\dagger\mathcal{B}_{\rm cl}\mathcal{P})-\sum_{i}\frac{\bradked{a_{i}}{(\mathcal{B}\mathcal{P}-\mathcal{P}\mathcal{B}^\dagger)|a_{i}}^2}{2}+\sum_{i}\frac{\bradked{a_{i}}{(\mathcal{B}\mathcal{P}-\mathcal{P}\mathcal{B}^\dagger)|a_{i}}^2}{2}\nonumber\\
    &=&\tr(\mathcal{B}^\dagger B\mathcal{P})-\tr(\mathcal{B}_{\rm cl}^\dagger\mathcal{B}_{\rm cl}\mathcal{P}),
\end{eqnarray}
where to get the third equality, we used the definition of $\mathcal{B}_{\rm cl}$ from Eq.~\eqref{classicalop}. Similarly, we can simplify the second term in Eq.~\eqref{varBnc}
\begin{eqnarray}\label{secondterm}
    \tr((\mathcal{B}-\mathcal{B}_{\rm cl})^\dagger\mathcal{P})\tr((\mathcal{B}-\mathcal{B}_{\rm cl})\mathcal{P})&=&\tr(\mathcal{B}^\dagger\mathcal{P})\tr(\mathcal{B}\mathcal{P})-\tr(\mathcal{B}^\dagger\mathcal{P})\tr(\mathcal{B}_{\rm cl}\mathcal{P})-\tr(\mathcal{B}_{\rm cl}^\dagger\mathcal{P})\tr(\mathcal{B}\mathcal{P})+\tr(\mathcal{B}_{\rm cl}^\dagger\mathcal{P})\tr(\mathcal{B}_{\rm cl}\mathcal{P})\nonumber\\
    &=&\tr(\mathcal{B}^\dagger\mathcal{P})\tr(\mathcal{B}\mathcal{P})-\tr(\mathcal{B}_{\rm cl}^\dagger\mathcal{P})\tr(\mathcal{B}_{\rm cl}\mathcal{P})+\tr(\mathcal{B}_{\rm cl}\mathcal{P})\left(-2\tr(\mathcal{B}_{\rm cl}\mathcal{P})-\tr(\mathcal{B}^\dagger\mathcal{P})+\tr(\mathcal{B}\mathcal{P})\right)\nonumber\\
    &=&\tr(\mathcal{B}^\dagger\mathcal{P})\tr(\mathcal{B}\mathcal{P})-\tr(\mathcal{B}_{\rm cl}^\dagger\mathcal{P})\tr(\mathcal{B}_{\rm cl}\mathcal{P}),
\end{eqnarray}
where in the second equality we have added and subtracted $\tr(\mathcal{B}_{\rm cl}^\dagger\mathcal{P})\tr(\mathcal{B}_{\rm cl}\mathcal{P})$ on the R.H.S. of the equation and used the anti-Hermiticity $\mathcal{B}_{\rm cl}^\dagger=-\mathcal{B}_{\rm cl}$ to simplify its last term. Finally to get the last equality we have used the fact that $\tr\mathcal{B}_{\rm cl}(\mathcal{P})=1/2(\tr(\mathcal{B}\mathcal{P})-\tr(\mathcal{B}^\dagger\mathcal{P}))$ (see Eq.~\eqref{classicalop}). Thus using Eqs.~\eqref{firstterm} and \eqref{secondterm} in Eq.~\eqref{varBnc}, we obtain
\begin{eqnarray} \label{varianceB}
    \Delta \mathcal{B}_{\rm nc}^2&=&\Delta \mathcal{B}^{2}-\Delta \mathcal{B}_{\rm cl}^{2}\nonumber\\
    &=&\tr(\mathcal{B}^{\dagger} \mathcal{B}\mathcal{P})-\tr(\mathcal{B}^{\dagger}\mathcal{P})\tr(\mathcal{B}\mathcal{P})-\tr( \mathcal{B}_{\rm cl}^{\dagger}\mathcal{B}_{\rm cl}\mathcal{P})+\tr(\mathcal{B}_{\rm cl}^{\dagger}\mathcal{P})\tr(\mathcal{B}_{\rm cl}\mathcal{P})\nonumber\\
    &=&\frac{1}{4}\left[\sum_{i}\left(\brad{a_i}\mathcal{B}\ked{\tilde{\rho}}^2+\brad{\tilde{\rho}}\mathcal{B}^{\dagger}\ked{a_i}^2\right)-\left(\tr(\mathcal{B}\mathcal{P})+\tr( \mathcal{B}^\dagger\mathcal{P})\right)^2+2\tr( \mathcal{B}^{\dagger}\mathcal{B}\mathcal{P})\right].
    \end{eqnarray}
Let us now define a measure of uncertainty for the operator $\mathcal{A}$, denoted by $\delta_{\mathcal{B}}\mathcal{A}$, as
\begin{align}
 (\delta_{\mathcal{B}}\mathcal{A})^{-2}=\sum_i\frac{ \brad{a_i}(\mathcal{B}\mathcal{P}+\mathcal{P} \mathcal{B}^\dagger-\mathcal{P}\tr[\mathcal{P}(\mathcal{B}+\mathcal{B}^\dagger)])\ked{a_i}^2}{(a_i|\mathcal{P}|a_i)}.\nonumber
 \end{align}
 For $\mathcal{P}=\kedbrad{\tilde{\rho}}{\tilde{\rho}}$, the above equation can be rewritten as
 \begin{align} \label{uncertaintyA}
(\delta_{\mathcal{B}}\mathcal{A})^{-2}=\left[\sum_i\left(\brad{a_i}\mathcal{B}\ked{\tilde{\rho}}^2+\brad{\tilde{\rho}}\mathcal{B}^{\dagger}\ked{a_i}^2\right)-\left(\tr(\mathcal{B}\mathcal{P})+\tr( \mathcal{B}^\dagger\mathcal{P})\right)^2+2\tr( \mathcal{B}^{\dagger}\mathcal{B}\mathcal{P})\right].
\end{align}
Finally, using Eqs.~\eqref{varianceB} and \eqref{uncertaintyA}, we obtain
\begin{equation}
         \delta_{\mathcal{B}}\mathcal{A} \Delta \mathcal{B}_{\rm nc}=1/2,
     \end{equation}
which holds for an arbitrary state vector $\ked{\tilde{\rho}}$ corresponding to the density matrix $\rho$. It is important to note that $\mathcal{B}_{\rm cl}$ commutes with $\mathcal{A}$, whereas $\mathcal{B}_{\rm nc}$ does not commute with $\mathcal{A}$, and the exact uncertainty relation is based on the non-commutativity of the given two operators acting on Liouville space.
\end{proof}
Interestingly, in Eq.~\eqref{exactuncrln1}, if we replace the operators acting on Liouville space $\mathcal{A}$, $\mathcal{B}$, and $\mathcal{P}$ with the operators acting on Hilbert space $A$, $B$, and $\rho$ (density matrix), respectively, we obtain the exact uncertainty relation for two operators $A$ and $B$ (where $A$ is Hermitian and $B$ is non-Hermitian). This result reduces to the exact uncertainty relation obtained by {\it M.J. Hall} in Ref.~\cite{Hall2001} if both operators are Hermitian and $\rho$ is a pure state.\\

{\it \bf Dynamical equation for CPTP dynamics:}
Here, we consider the physical processes, which are described by completely positive and trace-preserving (CPTP) dynamics. For finite-dimensional systems, any CPTP dynamics can be described by a time-local master equation, which can be written as~\cite{Lindblad1976, Gorini1976, Breuer2007}:
\begin{equation}\label{mastereq}
\frac{d\rho_t}{dt}  = \mathbb{L}_{t}(\rho_t) =-i[H_t,\rho_t] + \sum_{k=1}^{d^2-1}\gamma^{k}_{t}\left(L^{k}_{t}\rho_t L^{k\dagger}_{t}-\frac{1}{2}\{L^{k\dagger }_{t}L^{k}_{t},\rho_t\}\right),
\end{equation}
where \(H_t\) is the driving Hamiltonian of the system and \(L^{k}_t\) are the Lindblad operators with rates \(\gamma^{k}_t\). If \(\gamma^{k}_t\geq0\) the corresponding dynamical map 
\(\Lambda_t = \mathcal{T} \exp \left( \int_{0}^{t} d\tau \mathbb L_{\tau} \right)\) is completely positive and divisible, where $\mathcal{T}$ is the time ordering operator. However, when the Hamiltonian, Lindblad operators, and the rates are time-independent, the map can be written as  \(\Lambda_t =  \exp \left( \mathbb L t\right)\)~\cite{Breuer2007, Dariusz2010, Settimo2024}. It is important to note that there may exist a CPTP dynamical map $\Lambda_t$ that cannot be expressed in the form of a Lindblad master equation. 

Recall that for the product of three operators $A,B$ and $C$ the following identity holds~\cite{Gyamfi2020}:
 \begin{equation}\label{tripleprodidentity}
    \ked{ABC}=(C^T\otimes A)\ked{B},
\end{equation}
where $C^{T}$ is the transpose of $C$. 

The vectorized form of the above master equation can be written as
\begin{equation}\label{eomforrho}
    \frac{d}{dt}\ked{\rho_t}=\mathcal{L}_{t}\ked{\rho_t},
\end{equation}
where $\mathcal{L}_t$ is called Liouvillian. Note that $\mathcal{L}_t$ is a super-matrix associated with the Lindblad superoperator $\mathbb{L}_t$~\cite{Sarandy_2004}, which can be written using Eqs.~\eqref{tripleprodidentity} and \eqref{mastereq} as follows
\begin{eqnarray}\label{genofdynvec}
 \ked{[H_{t},\rho_t]}&=&\ked{H_t\rho_t\mathbb{I}}-\ked{\mathbb{I}\rho_t H_{t}}= \left(\mathbb{I}\otimes H_t-H_{t}^T\otimes\mathbb{I}\right)\ked{\rho_t},\nonumber\\
    \ked{L^{k}_{t}\rho_t L^{k\dagger}_{t}}-\frac{1}{2}\ked{\{L^{k\dagger }_{t}L^{k}_{t},\rho_t\}}&=&\left(L^{k*}_{t}\otimes L^{k}_t -\frac{1}{2}\left(\mathbb{I}\otimes L^{k\dagger}_{t}L^{k}_{t}+(L^{k\dagger }_{t}L^{k}_{t})^T\otimes\mathbb{I}\right)\right)\ked{\rho_t},
\end{eqnarray}
where $\mathbb{I}$ is the identity operator. 
Using the above two equations, we obtain
\begin{eqnarray} \label{decompofL}
    \mathcal{L}_{t} &=& -i\mathcal{L}_H+\mathcal{L}_D,
\end{eqnarray}
where $\mathcal{L}_H :=\left(\mathbb{I}\otimes H_t-H_{t}^T\otimes\mathbb{I}\right)$ is Hermitian and is the generator of the unitary dynamics, and
\begin{equation}
    \mathcal{L}_D :=\sum_{k} \gamma_{k} \left(L^{k*}_{t}\otimes L^{k}_t -\frac{1}{2}\left(\mathbb{I}\otimes L^{k\dagger}_{t}L^{k}_{t}+(L^{k\dagger }_{t}L^{k}_{t})^T\otimes\mathbb{I}\right)\right)
\end{equation}  is non-Hermitian and is the generator of the dissipative dynamics.

The above master Eq.~\eqref{eomforrho} does not preserve the norm of the state; therefore, the norm preserving master equation for any normalized state $\ked{\tilde\rho_t}$ can be written as 
\begin{eqnarray}\label{Master}
    \frac{d}{dt}\ked{\tilde\rho_t}&=&\frac{d}{dt}\left(\frac{\ked{\rho_t}}{\sqrt{\bradked{\rho_t}{\rho_t}}}\right)
    = \frac{d}{dt}\ked{{\tilde\rho_t}}=(\mathcal{L}_t-\frac{1}{2}\{(\tilde\rho_t|\mathcal{L}_t|\tilde\rho_t)+(\tilde\rho_t|\mathcal{L}_t^\dagger|\tilde\rho_t)\})\ked{{\tilde\rho_t}}, \label{equ:arbitraryL}
\end{eqnarray}
where $\ked{\tilde\rho_t}=\frac{\ked{\rho_t}}{\sqrt{\bradked{\rho_t}{\rho_t}}}$ and $\bradked{\rho_t}{\rho_t}=\tr(\rho_t^2)$. The above equation is valid for arbitrary Liouvillian. However, for the Lindblad master equation, we can further simplify the above equation by using the decomposition of $\mathcal{L}_t$ as $\mathcal{L}_t=-i\mathcal{L}_H+\mathcal{L}_D$, we have
\begin{eqnarray}\label{tr(LP)isrealproof}
    (\tilde\rho_t|(\mathcal{L}_t-\mathcal{L}_t^\dagger)|\tilde\rho_t)&=&(\tilde\rho_t|(-2i\mathcal{L}_H+\mathcal{L}_D-\mathcal{L}_D^\dagger)|\tilde\rho_t)\nonumber\\
    &=&-2i\bradked{\tilde{\rho}_t}{\mathcal{L}_H|\tilde{\rho}_t}+\bradked{\tilde{\rho}_t}{\mathcal{L}_D|\tilde{\rho}_t}-\bradked{\tilde{\rho}_t}{\mathcal{L}_D^{\dagger}|\tilde{\rho}_t}\nonumber\\
    &=&-2i(\tilde{\rho}_t|\left(\mathbb{I}\otimes H_t-H_t^T\otimes\mathbb{I}\right)|\tilde{\rho}_t)+\sum_k\gamma^k\bradked{\tilde{\rho}_t}{\left(L^{k*}_{t}\otimes L^{k}_t -\frac{1}{2}\left(\mathbb{I}\otimes L^{k\dagger}_{t}L^{k}_{t}+(L^{k\dagger }_{t}L^{k}_{t})^T\otimes\mathbb{I}\right)\right)|\tilde{\rho}_t}\nonumber\\
    &&-\sum_k\gamma^k\bradked{\tilde{\rho}_t}{\left(L^{k^{T}}_{t}\otimes L^{k\dagger}_t -\frac{1}{2}\left(\mathbb{I}\otimes L^{k\dagger}_{t}L^{k}_{t}+(L^{k\dagger }_{t}L^{k}_{t})^{*}\otimes\mathbb{I}\right)\right)|\tilde{\rho}_t}\nonumber\\
    &=&-2i\bradked{\tilde{\rho}_t}{[H_t,\tilde{\rho}_t]}+\sum_k \gamma^k \bradked{\tilde{\rho}_t}{L_t^k\tilde{\rho}_tL_t^{k\dagger}-\frac{1}{2}\{L_t^{k\dagger}L_t^k,\tilde{\rho}_t\}}-\sum_k \gamma^k \bradked{\tilde{\rho}_t}{L_t^{k\dagger}\tilde{\rho}_tL_t^{k}-\frac{1}{2}\{L_t^{k\dagger}L_t^k,\tilde{\rho}_t\}}\nonumber\\
    &=&-2i\tr(\tilde{\rho}_t[H_t,\tilde{\rho}_t])+\sum_k\gamma^k\left(\tr(\tilde{\rho}_tL_t^k\tilde{\rho}_tL_t^{k\dagger})-\tr(\tilde{\rho}_tL_t^{k\dagger}\tilde{\rho}_tL_t^k)\right)\nonumber\\
    &=&0,
\end{eqnarray}
where the fourth equality is due to Eq.~\eqref{genofdynvec} and in the fifth equality, we have used the fact that $\bradked{A}{B}=\tr(A^\dagger B)$. Finally, using the cyclic property of trace, we obtain the last equality. Hence, $\bradked{\tilde{\rho}_t}{\mathcal{L}|\tilde{\rho}_t}$ is a real number for any state $\ked{\tilde{\rho}_t}$, and thus for Lindblad master equation, Eq.~\eqref{equ:arbitraryL} can be written as
\begin{equation}\label{eomfornormrho}
    \frac{d}{dt}\ked{{\tilde\rho_t}}=\left(\mathcal{L}_t-\bradked{\tilde{\rho}_t}{\mathcal{L}_t|\tilde{\rho}_t}\right)\ked{\tilde{\rho}_t}.
\end{equation}
Using Eq.~\eqref{equ:arbitraryL}, the time evolution of the projection operator acting on Liouville space, ${\mathcal{P}}_t=\kedbrad{\tilde\rho_t}{\tilde\rho_t}$ associated with the time-evolved state vector $\ked{\tilde\rho_t}$, is given by the following master equation
\begin{eqnarray}\label{eomforprojector}
    \frac{d}{dt}\mathcal{P}_t=\mathcal{L}_{t}\mathcal{P}_t+\mathcal{P}_t\mathcal{L}_{t}^\dagger-\mathcal{P}_t\tr\left[(\mathcal{L}_{t}+\mathcal{L}_{t}^\dagger)\mathcal{P}_t\right].
\end{eqnarray}
 The Liouvillian can also be decomposed into an anti-Hermitian and a Hermitian part as \(\mathcal{L} = -i\mathcal{L}_{+} + \mathcal{L}_{-}\), where \(\mathcal{L}_{+}=\mathcal{L}_{H} +i\left(\mathcal{L}_{D}-\mathcal{L}^\dagger_{D}\right)/2\) and \(\mathcal{L}_{-}=\left(\mathcal{L}_{D}+\mathcal{L}^\dagger_{D}\right)/2\) are associated with the reversible and irreversible parts of the dynamics, respectively.

\section{Proof of Theorem~1}\label{appendixB}

Using Eq.~\eqref{eomforprojector}, the rate of change of the distinguishability between the initial state vector $\ked{\tilde{\rho}_{0}}$ and the time-evolved state vector $\ked{\tilde{\rho}_{t}}$ is given by 
\begin{eqnarray}\label{Dist}
\frac{d}{dt} (\tilde{\rho}_0|\tilde{\rho}_{t})^2&=&     \frac{d}{dt}\tr(\mathcal{P}_t \mathcal{P}_0)=\tr\left(\mathcal{L}\mathcal{P}_t\mathcal{P}_0\right)+\tr\left(\mathcal{P}_t\mathcal{L}^\dagger\mathcal{P}_0\right)-\tr\left(\mathcal{P}_t\mathcal{P}_0\right)\tr\left(\left(\mathcal{L}+\mathcal{L}^\dagger\right)\mathcal{P}_t\right)\label{rateofchangeofdis}\\
&=&X+X^*,
\end{eqnarray}
where $\mathcal{P}_t=\kedbrad{\tilde{\rho}_t}{\tilde{\rho}_t}$ and $\mathcal{P}_0=\kedbrad{\tilde{\rho}_0}{\tilde{\rho}_0}$ are the projection operators associated with the initial and time-evolved state vectors, respectively, $X:=\tr(\mathcal{L}\mathcal{P}_t \mathcal{P}_0)-\tr(\mathcal{P}_t\mathcal{P}_0)\tr(\mathcal{L}\mathcal{P}_t)$ and $X^*$ is the complex conjugate of $X$. Let us take the absolute value of both sides of the above equation, then we obtain
\begin{eqnarray}
    \left|\frac{d}{dt}\tr(\mathcal{P}_t \mathcal{P}_0)\right| &=&|X+X^*|\nonumber\\
    &\leq&2|X|\nonumber\\
    &=&2|\tr(\mathcal{L}\mathcal{P}_t \mathcal{P}_0)-\tr(\mathcal{P}_t\mathcal{P}_0)\tr(\mathcal{L}\mathcal{P}_t)|,\label{triangine}
\end{eqnarray}
where in the second step, we use the triangle inequality and the fact that $|X|=|X^*|$.

Using the uncertainty relation for non-Hermitian operators acting on Liouville space~\ref{uncertaintyrelnsuper} ($\mathcal{A}=\mathcal{P}_0$ and $\mathcal{B}=\mathcal{L}$) and the above inequality, we obtain the following bound on the rate of change of the distinguishability 
\begin{eqnarray}
    \left|\frac{d}{dt}\tr(\mathcal{P}_t \mathcal{P}_0)\right| 
    \leq2|\tr(\mathcal{L}\mathcal{P}_t \mathcal{P}_0)-\tr(\mathcal{P}_t\mathcal{P}_0)\tr(\mathcal{L}\mathcal{P}_t)|
    \leq2\Delta\mathcal{L}\Delta\mathcal{P}_0, \label{equ:rateofdis}
\end{eqnarray}
where  $(\Delta \mathcal{O})^2=\tr(\mathcal{O}^\dagger \mathcal{O}\mathcal{P}_{t})-\tr(\mathcal{O}^\dagger\mathcal{P}_{t}) \tr(\mathcal{O}\mathcal{P}_{t})$ is the variance of the operator $\mathcal{O}$ in the state $\ked{\tilde{\rho}_{t}}$. Let us integrate the above inequality on both sides and use the fact that $\abs{\int f(t)dt}\leq\int \abs{f(t)}dt$, then we obtain
\begin{equation}
\abs{\int_{0}^{T}\frac{d\tr(\mathcal{P}_t\mathcal{P}_0)}{\sqrt{\tr(\mathcal{P}_t\mathcal{P}_0)}\sqrt{1-\tr(\mathcal{P}_t\mathcal{P}_0)}}}\leq2\int_{0}^{T}\Delta\mathcal{L}dt.
\end{equation}
On performing the above integration, we obtain the following lower bound on the evolution time
\begin{equation} \label{mainbound}
T \geq \frac{ \Theta(\rho_0,\rho_T)}{\langle\!\langle\Delta\mathcal{L}\rangle\!\rangle_{T}},
\end{equation}
where ${\Theta(\rho_0,\rho_{T})}=\arccos\bradked{\tilde{\rho}_{0}}{\tilde{\rho}_{T}}$ is the Liouville space angle between the initial state vector $\ked{\tilde{\rho}_{0}}$ and the final state vector $\ked{\tilde{\rho}_{T}}$, and ${\langle\!\langle\Delta\mathcal{L}\rangle\!\rangle_{T}}=\frac{1}{T}\int_0^T \, dt \Delta \mathcal{L}$. 

Furthermore, we can write the variance of the Liouvillian as follows 
\begin{eqnarray}\label{Ln}
    \Delta\mathcal{L}^2&=&{\tr(\mathcal{L}^\dagger\mathcal{L}\mathcal{P}_t)-\tr(\mathcal{L}\mathcal{P}_t)^2}\nonumber\\
    &=&{\tr\left((i\mathcal{L}_H+\mathcal{L}_D^\dagger)(-i\mathcal{L}_H+\mathcal{L}_D)\mathcal{P}_t\right)-\tr\Big((-i\mathcal{L}_H+\mathcal{L}_D)\mathcal{P}_t\Big)^2}\nonumber\\
    &=&{\tr(\mathcal{L}_H^2\mathcal{P}_t)+i\tr\left((\mathcal{L}_H\mathcal{L}_D-\mathcal{L}_D^\dagger\mathcal{L}_H)\mathcal{P}_t\right)+\tr(\mathcal{L}_D^\dagger\mathcal{L}_D\mathcal{P}_t)-\tr(\mathcal{L}_D\mathcal{P}_t)^2}\nonumber\\
    &=&{\Delta\mathcal{L}_H^2+\Delta\mathcal{L}_D^2+i\tr\left((\mathcal{L}_H\mathcal{L}_D-\mathcal{L}_D^\dagger\mathcal{L}_H)\mathcal{P}_t\right)},
\end{eqnarray}
where in the second equality, we have used the decomposition $\mathcal{L}=-i\mathcal{L}_H+\mathcal{L}_D$, and the third equality is due to the fact that $\tr(\mathcal{L}_H\mathcal{P}_t)=0$ (see Eq.~\eqref{tr(LP)isrealproof}). Here, $\Delta \mathcal{L}_H^2 = \frac{2}{\tr{(\rho_t^2)}} \left(\tr\left(\rho_t^2 H_t^2\right)-\tr\left(\rho_t H_t\rho_t H_t\right)\right)$, which reduces to the variance of the Hamiltonian if $\rho_t$ is a pure state. It can be shown that $\Delta \mathcal{L}_H^2 \leq \frac{4}{\tr{(\rho_t^2)}}\Delta H_t$, where $\Delta H_t$ is variance of the Hamiltonian in state $\rho_t$~\cite{Luo2020}. Moreover, the last term of the above Eq.~\eqref{Ln} is upper bounded by $ 2 \Delta \mathcal{L}_{H} \Delta \mathcal{L}_{D}$. To see this, let us define two vectors $\ked{a} := \left(\mathcal{L}_{H}-\tr\left(\mathcal{L}_H \mathcal{P}_t\right)\right)\ked{\tilde{\rho}_t}$ and $\ked{b} := \left(\mathcal{L}_{D}-\tr\left(\mathcal{L}_D \mathcal{P}_t\right)\right)\ked{\tilde{\rho}_t}$. Now using the Cauchy-Schwarz inequality for $\ked{a}$ and $\ked{b}$, we get: 
\begin{align}
    \bradked{a}{a} \bradked{b}{b} \geq \abs{\bradked{a}{b}}^2 \geq \left(\frac{1}{2 i}\left[\bradked{a}{b}-\bradked{b}{a}\right]\right)^2,
\end{align}
where in the last inequality, we have used the fact that for any complex number $z$, we have $\abs{z}^2\geq (\rm{Im}(z))^2$. Now, by the definitions of \(\ked{a}\) and \(\ked{b}\), we have \(\bradked{a}{a} = \Delta \mathcal{L}^2_{H}\), \(\bradked{g}{g} = \Delta \mathcal{L}^2_{D}\), and $\bradked{a}{b}-\bradked{b}{a} =\tr\left((\mathcal{L}_H\mathcal{L}_D-\mathcal{L}_D^\dagger\mathcal{L}_H)\mathcal{P}_t\right) $. Using these in the above equation, we obtain
\begin{equation}
    2 \Delta \mathcal{L}_{H} \Delta \mathcal{L}_{D} \geq i \tr\left((\mathcal{L}_H\mathcal{L}_D-\mathcal{L}_D^\dagger\mathcal{L}_H)\mathcal{P}_t\right).
\end{equation}
If we take the decomposition of the Liouvillian into an anti-Hermitian and a Hermitian part as \(\mathcal{L} = -i\mathcal{L}_{+} + \mathcal{L}_{-}\), then we can write the variance of the Liouvillian as follows 
\begin{eqnarray}
    \Delta\mathcal{L}^2 &=&{\Delta\mathcal{L}_{+}^2+\Delta\mathcal{L}_{-}^2+i\tr\left(\left[\mathcal{L}_{+},\mathcal{L}_{-}\right]\mathcal{P}_t\right)}. 
\end{eqnarray}
\section{Geometrical derivation of Inexact Quantum Speed Limit for arbitrary time-continuous dynamics}\label{appendixC}
The Hilbert-Schmidt distance between two state vectors $\ked{\tilde{\rho}_1}$ and $\ked{\tilde{\rho}_2}$ is given as
\begin{equation}
    S^2 =\frac{1}{2}\norm{\mathcal{P}_1-\mathcal{P}_2}^2_{\rm HS} =(1- \bradked{\tilde{\rho}_1}{\tilde{\rho}_2}\bradked{\tilde{\rho}_2}{\tilde{\rho}_1}),
\end{equation}
where $\mathcal{P}_1 = \kedbrad{\tilde{\rho}_1}{\tilde{\rho}_1}$ and $\mathcal{P}_2 = \kedbrad{\tilde{\rho}_2}{\tilde{\rho}_2}$ are projection operators acting on Liouville space, associated with the state vectors $\ked{\tilde{\rho}_1}$ and $\ked{\tilde{\rho}_2}$, respectively.

Let us consider a quantum system whose evolution is governed by arbitrary time-continuous dynamics. At time $t$, its state vector is given by $\ked{\tilde\rho_t}$. The distance between two infinitesimally separated state vectors $\ked{\tilde\rho_t}$ and $\ked{\tilde\rho_{t+dt}}$ is given as
\begin{eqnarray}\label{metric}
    dS^{2}&=&(1- \bradked{\tilde{\rho}_t}{\tilde{\rho}_{t+dt}}\bradked{\tilde{\rho}_{t+dt}}{\tilde{\rho}_t})\nonumber\\
    &=&\left(1-\frac{\bradked{\rho_t}{\rho_{t+dt}}\bradked{\rho_{t+dt}}{\rho_t}}{\bradked{\rho_t}{\rho_t}\bradked{\rho_{t+dt}}{\rho_{t+dt}}}\right),
\end{eqnarray}
where $\ked{\tilde\rho_t}=\ked{\rho_t}/\sqrt{\bradked{\rho_t}{\rho_t}}$.

Using the Taylor expansion of $\ked{\rho_{t+dt}}$ and taking its inner product with $\ked{\rho_t}$, we obtain 
\begin{equation}
    \bradked{\rho_t}{\rho_{t+dt}}=\bradked{\rho_t}{\rho_t}+\bradked{\rho_t}{\dot{\rho}_t}dt+\bradked{\rho_t}{\ddot{\rho}_t}\frac{dt^2}{2}+ {\text{higher order terms}}. 
\end{equation}

Using above expression, we can expand the quantity  $\bradked{\rho_t}{\rho_{t+dt}}\bradked{\rho_{t+dt}}{\rho_t}$ up to second order in $dt$ as 
\begin{equation}
    \bradked{\rho_t}{\rho_{t+dt}}\bradked{\rho_{t+dt}}{\rho_t}=\bradked{\rho_t}{\rho_t}\left(\bradked{\rho_t}{\rho_t}+\left(\bradked{\dot{\rho}_t}{\rho_t}+\bradked{\rho_t}{\dot{\rho}_t}\right)dt+\left(\bradked{\ddot{\rho}_t}{\rho_t}+2\bradked{\rho_t}{\dot{\rho}_t}\bradked{\dot{\rho}_t}{\rho_t}+\bradked{\rho_t}{\ddot{\rho}_t}\right)\frac{dt^2}{2}\right).
\end{equation}
Similarly,
\begin{equation}
    \bradked{\rho_{t+dt}}{\rho_{t+dt}}=\bradked{\rho_t}{\rho_t}+\left(\bradked{\rho_t}{\dot{\rho}_t}+\bradked{\dot{\rho}_t}{\rho_t}\right)dt+\left(\bradked{\rho_t}{\ddot{\rho}_t}+2\bradked{\dot{\rho}_t}{\dot{\rho}_t}+\bradked{\ddot{\rho}_t}{\rho_t}\right)\frac{dt^2}{2}.
\end{equation}
Using the above two equations and considering the terms up to $dt^2$, we obtain the expression for $dS^2$ as follows
\begin{align}\label{genmetric}
    dS^2 &=\left(\frac{\bradked{\dot{\rho}_t}{\dot{\rho}_t}}{\bradked{\rho_t}{\rho_t}}-\frac{\bradked{\rho_t}{\dot{\rho}_t}\bradked{\dot{\rho}_t}{\rho_t}}{\bradked{\rho_t}{\rho_t}^2}\right)dt^2\nonumber\\
    &= \left(\bradked{\dot{\tilde{\rho}}_t}{\dot{\tilde{\rho}}_t}-\bradked{\tilde\rho_t}{\dot{\tilde{\rho}}_t}\bradked{\dot{\tilde{\rho}}_t}{\tilde\rho_t}\right)dt^2,
\end{align}
where $\ked{\dot{\tilde{\rho}}_t}=\frac{d}{dt}\ked{\tilde{\rho}_t}$ and the second equality follows from the fact that  
\begin{equation}
    \frac{\bradked{\dot{\rho}_t}{\dot{\rho}_t}}{\bradked{\rho_t}{\rho_t}}=\bradked{\dot{\tilde{\rho}}_{t}}{\dot{\tilde{\rho}}_{t}}+\frac{(\bradked{\dot{\rho}_t}{\rho_t}+\bradked{\rho_t}{\dot{\rho}_t})^2}{4\bradked{\rho_t}{\rho_t}^2} ~~\text{and} ~~ \frac{\bradked{\rho_t}{\dot{\rho}_t}\bradked{\dot{\rho}_t}{\rho_t}}{\bradked{\rho_t}{\rho_t}^2}=\bradked{\tilde\rho_t}{\dot{\tilde{\rho}}_t}\bradked{\dot{\tilde{\rho}}_t}{\tilde\rho_t}+\frac{(\bradked{\dot{\rho}_t}{\rho_t}+\bradked{\rho_t}{\dot{\rho}_t})^2}{4\bradked{\rho_t}{\rho_t}^2}.
\end{equation}

The Eq.~\eqref{genmetric} can be thought of as Fubini-study metric in Liouville space, where $dS$ measures the elemental distance along the evolution path (trajectory). Using Eq.~\eqref{genmetric}, the speed of the evolution (transportation) of the quantum system in Liouville space can be defined as
\begin{equation}
    \mathcal{V}_t=\frac{dS}{dt}=\sqrt{\bradked{\dot{\tilde{\rho}}_t}{\dot{\tilde{\rho}}_t}-\bradked{\tilde\rho_t}{\dot{\tilde{\rho}}_t}\bradked{\dot{\tilde{\rho}}_t}{\tilde\rho_t}}.
\end{equation}
On integrating the above equation, we obtain
\begin{equation}
    T=\frac{S}{{\langle\!\langle\mathcal{V}_t\rangle\!\rangle_{T}}},~~~\text{where}~~{\langle\!\langle\mathcal{V}_t\rangle\!\rangle_{T}}=\frac{1}{T}\int_{0}^{T} dt~\mathcal{V}_t.
\end{equation}
Now, using the geometry of quantum evolution in Liouville space, the total distance $S$ traveled by the state vector $\ked{\tilde{\rho}_t}$, along the evolution path which joins the state vectors $\ked{\tilde{\rho}_0}$ and $\ked{\tilde{\rho}_T}$ is always lower bounded by the shortest distance (geodesic) connecting them, i.e., $S \geq \Theta(\rho_0,\rho_T)$. Thus, we obtain a lower bound on the total evolution time of a quantum system
\begin{equation} \label{gqslforarbdyn}
T \geq \frac{ \Theta(\rho_0,\rho_{T})}{{\langle\!\langle\mathcal{V}_t\rangle\!\rangle_{T}}}.
\end{equation}

The above bound is applicable to arbitrary time-continuous dynamics. It is important to note that previously, a geometric quantum speed limit has been obtained for arbitrary time-continuous dynamics, which requires purification of the states (see Ref.~\cite{Thakuria2024}). This approach may have some shortcomings because purifications can overestimate the distance between states, and also the speed of evolution. However, the bound in Eq.~\ref{gqslforarbdyn} does not have such shortcomings; therefore, it is relatively better. Moreover, the bound in Eq.~\ref{gqslforarbdyn} is significantly simpler to calculate than other geometric speed limits applicable to arbitrary time-continuous dynamics (see Refs.~\cite{Taddei2013, Pires2016}).

{\it \bf Geometrical derivation Inexact Quantum Speed Limit for CPTP dynamics (Alternate Proof of Theorem-1):} For the CPTP dynamics given by Eq.~\eqref{Master}, we can write the speed of evolution as 
\begin{equation}\label{CS}
    \mathcal{V}_t=\Delta\mathcal{L}=\sqrt{\tr\left(\mathcal{L}^\dagger\mathcal{L}\mathcal{P}_t\right)-\tr\left(\mathcal{L}\mathcal{P}_t\right)\tr\left(\mathcal{L}^\dagger\mathcal{P}_t\right)}.
\end{equation}
To arrive at Eq.~\eqref{CS} from Eq.~\eqref{gqslforarbdyn}, we have used $\bradked{\dot{\tilde{\rho}}_t}{\dot{\tilde{\rho}}_t}=\tr(\mathcal{L}^\dagger\mathcal{L}\mathcal{P}_t)-(\tr\left(\mathcal{L}\mathcal{P}_t\right)+\tr(\mathcal{L}^\dagger\mathcal{P}_t))^2/4$ and $\bradked{\tilde\rho_t}{\dot{\tilde{\rho}}_t}\bradked{\dot{\tilde{\rho}}_t}{\tilde\rho_t}=\tr\left(\mathcal{L}\mathcal{P}_t\right)\tr(\mathcal{L}^\dagger\mathcal{P}_t)-(\tr\left(\mathcal{L}\mathcal{P}_t\right)+\tr(\mathcal{L}^\dagger\mathcal{P}_t))^2/4$. Using this evolution speed in Eq.~\eqref{gqslforarbdyn}, we obtain
\begin{equation}
    T\geq \frac{\Theta(\rho_0,\rho_{T})}{\langle\!\langle\Delta\mathcal{L}\rangle\!\rangle_{T}},
\end{equation}
where  $\langle\!\langle\Delta\mathcal{L}\rangle\!\rangle_{T}= \frac{1}{T}\int_0^T dt \Delta \mathcal{L}$.\\

{\it \bf Geometrical derivation of Inexact Quantum Speed Limit for CPTP map described by operator-sum representation:}
The CPTP dynamics of a quantum system can also be described by the operator-sum representation, which is given as:
\begin{equation}
    \rho_t=\sum_i K_i(t) \rho_0 K^\dagger_i(t),
\end{equation}
where $\{K_{i}(t)\}$ are called Kraus operators and satisfy the completeness relation $\sum_{i} K_{i}^{\dag}(t) K_{i}(t) = \mathbb{I}$. The vectorized form of the above equation can be written as $\ked{\rho_t}=\mathcal{K}_t\ked{\rho_0}$, where $\mathcal{K}_{t}=\sum_i K_i(t) \otimes K^{*}_i(t)$ is the corresponding operator that acts on $\ked{\rho_0}$ in Liouville space. The normalized state evolves as $\ked{\tilde{\rho}_t} =\widetilde{\mathcal{K}}_t\ked{\tilde{\rho}_0}$, where $\widetilde{\mathcal{K}}_t=\mathcal{K}_t/\sqrt{\bradked{\rho_0 \mathcal{K}^\dagger_t}{\mathcal{K}_t\rho_0}}$. Using this equation we obtain $\bradked{\dot{\tilde{\rho}}_t}{\dot{\tilde{\rho}}_t}=\tr(\dot{\widetilde{\mathcal{K}}}^\dagger_t\dot{\widetilde{\mathcal{K}}}_t\mathcal{P}_0)$ and $\bradked{\tilde\rho_t}{\dot{\tilde{\rho}}_t}\bradked{\dot{\tilde{\rho}}_t}{\tilde\rho_t}=\tr(\dot{\widetilde{\mathcal{K}}}^\dagger_t\mathcal{P}_t\dot{\widetilde{\mathcal{K}}}_t\mathcal{P}_0)$. Hence, the evolution speed of the system in Liouville space in terms of the Kraus operators can be written as 
\begin{equation}
\mathcal{V}_t=\sqrt{\tr(\dot{\widetilde{\mathcal{K}}}^\dagger_t\dot{\widetilde{\mathcal{K}}}_t\mathcal{P}_0)-\tr(\dot{\widetilde{\mathcal{K}}}^\dagger_t\mathcal{P}_t\dot{\widetilde{\mathcal{K}}}_t\mathcal{P}_0)}.
\end{equation}
Using the above equation, we obtain the lower bound on the evolution time as
\begin{equation}
    T\geq \frac{\Theta(\rho_0,\rho_{T})}{\left\langle\!\left\langle\sqrt{\tr(\dot{\widetilde{\mathcal{K}}}^\dagger_t\dot{\widetilde{\mathcal{K}}}_t\mathcal{P}_0)-\tr(\dot{\widetilde{\mathcal{K}}}^\dagger_t \mathcal{P}_t\dot{\widetilde{\mathcal{K}}}_t\mathcal{P}_0)} \right\rangle\!\right\rangle_{T}},
\end{equation}

where \(\langle\!\langle X_t \rangle\!\rangle_{T} := \frac{1}{T} \int_{0}^{T} \, dt X_t \).

\section{Proof of Corollary}\label{appendixD}
To prove the saturation of the generalized Mandelstam-Tamm bound~\ref{mainbound} (Eq. (4) in the main text), we first show when Eq.~\eqref{equ:rateofdis} saturates. Let us recall Eq.~\eqref{equ:rateofdis} 
 \begin{eqnarray}\label{boundonrateofdis}
    \left|\frac{d}{dt}\tr(\mathcal{P}_t \mathcal{P}_0)\right| 
    \leq2|\tr(\mathcal{L}\mathcal{P}_t \mathcal{P}_0)-\tr(\mathcal{P}_t\mathcal{P}_0)\tr(\mathcal{L}\mathcal{P}_t)|
    \leq2\Delta\mathcal{L}\Delta\mathcal{P}_0.
\end{eqnarray}
Let us define two vectors in Liouville space as $\ked{g} := \left(\mathcal{L}-\tr{(\mathcal{L}\mathcal{P}_{t})}\right)\ked{\tilde{\rho}_t}$ and $\ked{f} := \left(\mathcal{P}_{0}-\tr{(\mathcal{P}_{0}\mathcal{P}_{t})}\right)\ked{\tilde{\rho}_t}$, then the above inequality can be rewritten as 
\begin{eqnarray}
    \left|\frac{d}{dt}\tr(\mathcal{P}_t \mathcal{P}_0)\right| 
    \leq2|\tr(\mathcal{L}\mathcal{P}_t \mathcal{P}_0)-\tr(\mathcal{P}_t\mathcal{P}_0)\tr(\mathcal{L}\mathcal{P}_t)|=2|\bradked{f}{g}|
    \leq2\sqrt{\bradked{g}{g}}\sqrt{\bradked{f}{f}}.
\end{eqnarray}
Here, the first inequality is due to the triangle inequality of the form $\abs{\bradked{f}{g}+\bradked{g}{f}}\leq2\abs{\bradked{f}{g}}$ (see Eq.~\eqref{triangine}), which saturates if $\ked{g} = a \ked{f}$, for some scalar $a\in\mathbb{R}$. The second inequality is due to the Cauchy-Schwarz inequality, which saturates if $\ked{g} = a \ked{f}$, for some $a\in \mathbb{C}$. Thus, to saturate both of these inequalities $a\in\mathbb{R}$. Hence, we obtain
 \begin{equation}
     \mathcal{L}\ked{\tilde{\rho}_{t}} - \brad{\tilde{\rho}_{t}}\mathcal{L}\ked{\tilde{\rho}_{t}}\ked{\tilde{\rho}_{t}} = a \bradked{\tilde{\rho}_{0}}{\tilde{\rho}_{t}}\left(\ked{\tilde{\rho}_{0}} - \bradked{\tilde{\rho}_{0}}{\tilde{\rho}_{t}} \ked{\tilde{\rho}_{t}}\right),~~~~~~~a\in\mathbb{R}.\label{equ_a}
 \end{equation}
The above equation provides a condition on the saturation of Eq.~\eqref{boundonrateofdis} in terms of a vector equation. By taking the inner product of the above equation with its complex conjugate, we obtain 
\begin{eqnarray}\label{valueofa}
    a=\pm\frac{\Delta \mathcal{L}}{\bradked{\tilde{\rho}_0}{\tilde{\rho}_t}\sqrt{1-\bradked{\tilde{\rho}_0}{\tilde{\rho}_t}^2}}.
\end{eqnarray}
Of the two values of $a$ given in the above equation, only one satisfies Eq.~\eqref{equ_a}, which will be determined by the dynamics of the system. This fixes the value of $a$ at any time $t$ for a given Liouvillian $\mathcal{L}$ and initial state $\ked{\tilde{\rho}_0}$. Now, using Eq.~\eqref{equ_a} along with Eq.~\eqref{eomfornormrho}, we obtain the saturation condition for the bound in Eq.~ \ref{boundonrateofdis}, which is given by the following dynamical equation
\begin{equation} \label{satcondondyn}
    \ked{\dot{\tilde{\rho}}_t}=\rm{sgn}(a)\frac{\Delta \mathcal{L}}{\sqrt{1-\bradked{\tilde{\rho}_0}{\tilde{\rho}_t}^2}}\left(\ked{\tilde{\rho}_{0}} - \bradked{\tilde{\rho}_{0}}{\tilde{\rho}_{t}}\ked{{\tilde\rho_t}}\right),
\end{equation}
where $\sgn(a)$ denotes the sign of $a$. Now, by the Aharonov-Vaidman identity~\cite{Aharonov1990}, we have
\begin{equation}
     \mathcal{L}\ked{\tilde{\rho}_t}=\bradked{\tilde{\rho}_t}{\mathcal{L}|\tilde{\rho}_t}\ked{\tilde{\rho}_t}+ \Delta \mathcal{L}\ked{\tilde{\rho}^{\perp}_{t}},
\end{equation}
where $\ked{\tilde{\rho}^{\perp}_{t}}$ depends on $\mathcal{L}$, and $\bradked{\tilde{\rho}_t}{\tilde{\rho}^{\perp}_{t}}=0$. We can rewrite the above identity using Eq.~\eqref{eomfornormrho} as follows
\begin{equation}\label{avidentity}
    \ked{\dot{\tilde{\rho}}_t}=\Delta \mathcal{L}\ked{\tilde{\rho}^{\perp}_{t}}.
\end{equation}
Using the above relation along with the saturation condition in Eq.~\ref{satcondondyn}, we obtain
 \begin{equation}\label{rhotperpdef}
    \ked{\tilde{\rho}^{\perp}_{t}}= \sgn(a)\frac{1}{\sqrt{1-\alpha_t^2}}\left(\ked{\tilde{\rho}_{0}} - \alpha_t \ked{\tilde{\rho}_{t}}\right),
 \end{equation}
where we have defined $\alpha_t:=\bradked{\tilde{\rho}_{0}}{\tilde{\rho}_{t}}$ for notational simplicity. Let us now take the time derivative of the above equation
\begin{eqnarray}\label{derivativeofrhotperp}
    \ked{\dot{\tilde{\rho}}^{\perp}_{t}}&=&\sgn(a)\left[\frac{\alpha_{t}\dot{\alpha}_{t}}{(1-\alpha_t^2)^{3/2}} \left(\ked{\tilde{\rho}_{0}} - \alpha_t \ked{\tilde{\rho}_{t}}\right)+\frac{1}{\sqrt{1-\alpha_t^2}}\left(-\alpha_t \ked{\dot{\tilde{\rho}}_t} - \dot{\alpha}_t \ked{\tilde{\rho}_t} \right)\right]\nonumber\\
    &=&\sgn(a)\left[\frac{\alpha_{t}\dot{\alpha}_{t}}{1-\alpha_t^2} \ked{\tilde{\rho}^{\perp}_{t}}+\frac{1}{\sqrt{1-\alpha_t^2}}\left(-\alpha_t \Delta \mathcal{L}\ked{\tilde{\rho}^{\perp}_{t}} - \dot{\alpha}_t \ked{\tilde{\rho}_t} \right)\right],
\end{eqnarray}
where in the second equality, we have used Eq.~\eqref{avidentity}. Now by rearranging Eq.~\eqref{rhotperpdef}, we obtain
\begin{equation}\label{rhoinitialsat}
    \ked{\tilde{\rho}_0}=\alpha_t\ked{\tilde{\rho}_{t}}+\sgn(a)\sqrt{1-\alpha_t^2}\ked{\tilde{\rho}^{\perp}_{t}}.
\end{equation}
Using the above equation, we can write a normalized orthogonal vector $\ked{\tilde{\rho}_0^{\perp}}$ as follows
\begin{equation}\label{rhoinitialperpsat}
    \ked{\tilde{\rho}^{\perp}_{0}}=\sgn(a)\sqrt{1-\alpha_t^2} \ked{\tilde{\rho}_{t}}-\alpha_t\ked{\tilde{\rho}^{\perp}_{t}},
\end{equation}
where $\bradked{\tilde{\rho}_0}{\tilde{\rho}^{\perp}_{0}}=0$. Next, we show that $\ked{\tilde{\rho}^{\perp}_{0}}$ is necessarily time-independent. On taking the derivative of the above equation with respect to time $t$, we obtain
\begin{align}
    \ked{\dot{\tilde{\rho}}^{\perp}_{0}} &=-\sgn(a)\frac{\alpha_{t}\dot{\alpha}_{t}}{\sqrt{1-\alpha_t^2}} \ked{\tilde{\rho}_t}+\sgn(a)\sqrt{1-\alpha_t^2} \ked{\dot{\tilde{\rho}}_t} - \dot{\alpha}_{t} \ked{\tilde{\rho}^{\perp}_{t}}-\alpha_{t}\ked{\dot{\tilde{\rho}}^{\perp}_{t}}\nonumber\\
    &= -\sgn(a)\frac{\alpha_{t}\dot{\alpha}_{t}}{\sqrt{1-\alpha_t^2}} \ked{\tilde{\rho}_t}+\sgn(a)\sqrt{1-\alpha_t^2} \Delta \mathcal{L} \ked{\tilde{\rho}^{\perp}_{t}}-\dot{\alpha}_t \ked{\tilde{\rho}^{\perp}_{t}} - \alpha_t \sgn(a)\left(\frac{\alpha_{t}\dot{\alpha}_{t}}{1-\alpha_t^2} \ked{\tilde{\rho}^{\perp}_{t}}+\frac{1}{\sqrt{1-\alpha_t^2}}\left(-\alpha_t \Delta \mathcal{L}\ked{\tilde{\rho}^{\perp}_{t}} - \dot{\alpha}_t \ked{\tilde{\rho}_t} \right)\right)\nonumber\\
    & = \sgn(a)\left(\frac{\alpha_{t}\dot{\alpha}_{t}}{\sqrt{1- \alpha_t^2}}  - \frac{\alpha_{t}\dot{\alpha}_{t}}{\sqrt{1-\alpha_t^2}} \right)\ked{\tilde{\rho}_t} + \left(\sgn(a)\sqrt{1-\alpha_t^2} \Delta \mathcal{L} -\dot{\alpha}_t - \sgn(a)\frac{\alpha_t^2\dot{\alpha}_t}{1-\alpha_t^2} + \frac{\alpha_t^2\Delta \mathcal{L}}{\sqrt{1- \alpha_t^2}}  \right) \ked{\tilde{\rho}^{\perp}_{t}}\nonumber\\
    &=0,
\end{align}
where in the second equality, we have used Eqs.~\eqref{avidentity} and \eqref{derivativeofrhotperp}, and in the third equality we rearranged the coefficients of $\ked{\tilde{\rho}_t}$ and $\ked{\tilde{\rho}^{\perp}_{t}}$. Finally, to obtain the last equality, we have used the fact that $\dot{\alpha}_t=\bradked{\tilde{\rho}_{0}}{\dot{\tilde{\rho}}_{t}}=\bradked{\tilde{\rho}_{0}}{\tilde{\rho}^{\perp}_{t}}\Delta \mathcal{L}=\sgn(a)\sqrt{1-\alpha_t^2}\Delta\mathcal{L}$. Therefore, $\ked{\tilde{\rho}^{\perp}_{0}}$ does not change in time. Now, using Eqs.~\eqref{rhoinitialsat} and \eqref{rhoinitialperpsat}, we can write the time-evolved normalized state as
\begin{equation}
\ked{\tilde{\rho}_t}=\bradked{\tilde{\rho}_{0}}{\tilde{\rho}_{t}} \ked{\tilde{\rho}_{0}}+\sgn(a)\sqrt{1-\bradked{\tilde{\rho}_0}{\tilde{\rho}_t}^2}\ked{\tilde{\rho}^{\perp}_{0}}.
\end{equation}
The above equation says that for the dynamics that saturate the bound in Eq.~\ref{boundonrateofdis}, the evolution of the state is such that $\ked{\tilde{\rho}_t}$ always resides in the two-dimensional subspace spanned by $\{\ked{\tilde{\rho}_{0}}, \ked{\tilde{\rho}^{\perp}_{0}}\}$. 
Note that the vector $\ked{\tilde{\rho}^{\perp}_{0}}$ may not correspond to a physical state in general. In that case, the state $\ked{\tilde{\rho}_t}$ of the system will never reach the vector $\ked{\tilde{\rho}^{\perp}_{0}}$. To see the conditions under which $\ked{\tilde{\rho}^{\perp}_{0}}$ becomes a physical state, let us write the above equation in the space of linear operators as
\begin{equation}\label{rho0perpmatrixform}
\frac{\rho_t}{\sqrt{\tr(\rho_t^2)}}=\frac{\tr(\rho_0\rho_t)}{\sqrt{\tr(\rho_0^2)}\sqrt{\tr(\rho_t^2)}}\frac{\rho_0}{\sqrt{\tr(\rho_0^2)}}+\sgn(a)\frac{\sqrt{\tr(\rho_0^2)\tr(\rho_t^2)-\tr(\rho_0\rho_t)^2}}{\sqrt{\tr(\rho_0^2)}\sqrt{\tr(\rho_t^2)}}\frac{\rho^{\perp}_{0}}{\sqrt{\tr(\rho^{\perp}_0)^2}}.
\end{equation}
It is clear from the above equation that $\rho^{\perp}_0$ is Hermitian. Moreover, by taking trace of the above equation, we obtain
\begin{equation}
    \frac{\tr(\rho^{\perp}_0)}{\sqrt{\tr(\rho^{\perp}_0)^2}}=\sgn(a)\frac{\tr(\rho_0^2)-\tr(\rho_0\rho_t)}{\sqrt{\tr(\rho_0^2)}\sqrt{\tr(\rho_0^2)\tr(\rho_t^2)-\tr(\rho_0\rho_t)^2}}.
\end{equation}
Thus, $\tr(\rho^{\perp}_0)=1$ if
\begin{equation}\label{cond1onrhoperpstate}
\sqrt{\tr(\rho^{\perp}_0)^2}=\sgn(a)\frac{\sqrt{\tr(\rho_0^2)}\sqrt{\tr(\rho_0^2)\tr(\rho_t^2)-\tr(\rho_0\rho_t)^2}}{\tr(\rho_0^2)-\tr(\rho_0\rho_t)}.
\end{equation}
If the dynamics of the system is such that Eqs.~\eqref{cond1onrhoperpstate} is satisfied then Eq.~\eqref{rho0perpmatrixform} can be rewritten as
\begin{equation}\label{geodesic}
    \rho_t=\frac{\tr(\rho_0\rho_t)}{\tr(\rho_0^2)}\rho_0+\left(1-\frac{\tr(\rho_0\rho_t)}{\tr(\rho_0^2)}\right)\rho^{\perp}_0,
\end{equation}
which traces a line in Liouville space. Hence, the dynamics keep the time-evolved state $\rho_t$ on a line between the initial state $\rho_0$ and its orthogonal operator $\rho^{\perp}_0$. Moreover, the operator $\rho_0^\perp$ corresponds to a physical state if it is also positive-semidefinite, i.e., if
\begin{eqnarray} \label{cond2onrhoperpstate}
    \braket{\phi}{\rho_t|\phi}-\frac{\tr(\rho_0\rho_t)}{\tr(\rho_0^2)}\braket{\phi}{\rho_0|\phi}\geq0~~~~~~~~\forall \phi.
\end{eqnarray}
The evolution given by Eq.~\eqref{geodesic} saturates the
the bound in Eq.~\ref{boundonrateofdis}. However, to saturate the generalized Mandelstam-Tamm bound~\ref{mainbound} (Eq. (4) in the main text), we require $P_t:=\frac{\tr(\rho_0\rho_t)}{\tr(\rho_0^2)}$ to be monotonically decreasing as we have used an additional inequality $\left(\abs{\int f(t)dt}\leq\int \abs{f(t)}dt\right)$ to arrive at the generalized Mandelstam-Tamm bound from Eq.~\eqref{equ:rateofdis}. It is important to note that Eq.~\eqref{geodesic} connects the given initial and final states ($\rho_0$ and $\rho^\perp_0$) via geodesic in Liouville space. We refer the underlying CPTP dynamics as optimal CPTP dynamics as it transports the state of the system through the minimal path, i.e., geodesic in Liouville space. In the following, we present the form of the Liouvillian that generates the optimal CPTP dynamics for given initial and final states. \\

{\it \bf Liouvillian and Kraus operators for optimal CPTP dynamics when initial and final states are orthogonal:} Let the initial and final states of a $d$-dimensional quantum system be $\rho_0$ and  $\rho^\perp_0$, respectively, where $\tr(\rho_0\rho^\perp_0)=0$. Then using Eq.~\eqref{geodesic}, we have
\begin{equation}\label{channelforgeodesic}
\rho_t=\mathcal{E}_t(\rho_0)=\sum_iK_i(t)\rho_0K_i^\dagger(t)={P}_t\rho_0+(1-{P}_t)\rho^\perp_0,
\end{equation}
where $\{K_i(t)\}$ are the Kraus operators of the CPTP map $\mathcal{E}_t$ and ${P}_t=\frac{\tr(\rho_0\rho_t)}{\tr(\rho_0^2)}$. It is easy to see that the Kraus operators of $\mathcal{E}_t$ are $K_0(t)=\sqrt{P_t}~\mathbb{I}$ and $K_1(t)=\sqrt{1-{P}_t}U$, where $U$ is unitary such that $\rho^\perp_0=U\rho_0U^\dagger$. We know that the Kraus operators of any CPTP dynamical map can be written in terms of the Hamiltonian ($H$) and the Lindblad operators (\{$L_k$\}) of the corresponding master equation under the Born-Markov approximation as
\begin{eqnarray}\label{kraus-liouvillan}
    K_0(t)&=&\mathbb{I}+\left(-i H-\frac{1}{2}\sum_{k\geq1}\gamma_k L_k^\dagger L_k\right)dt,\nonumber\\
    K_k(t)&=&\sqrt{\gamma_k}L_k\sqrt{dt},~~~~k\geq1.
    \end{eqnarray}
Using the above equation and the Kraus operators of 
$\mathcal{E}_t$, we obtain
\begin{eqnarray}
    H&=&0,\nonumber\\
    L_1&=&U,~~~~L_{k\geq2}=0,
\end{eqnarray}
and $P_t=1-\gamma dt +\mathcal{O}(dt^2)$, where $\gamma=\gamma_1$. 
We can now write the Liouvillian using Eq.~\eqref{decompofL} as
\begin{equation}
    \mathcal{L}=\gamma\left(U^{*}\otimes U-\mathbb{I}\otimes\mathbb{I}\right),
\end{equation}
where we have used the fact $L_1^\dagger L_1=\mathbb{I}$. This Liouvillian $\mathcal{L}$ generates a CPTP dynamics that saturates the generalized Mandelstam-Tamm bound \ref{mainbound} (Eq. (4) in the main text) for given initial and final states.\\

{\it \bf Liouvillian and Kraus operators for optimal CPTP dynamics when initial and final states are pure and orthogonal:} If the initial and final states of a $d$-dimensional quantum system are $\rho_0=\ketbra{\psi}$ and $\rho^{\perp}_0=\ketbra{\psi^{\perp}}$, where $\braket{\psi}{\psi^{\perp}}=0$, then the Kraus operators of the CPTP map can be written as $\{K_0(t)=\sqrt{P_t}~\mathbb{I}~~ \text{and}~~K_1(t)=\sqrt{1-{P}_t}\left(\ketbra{\psi}{\psi^\perp}+\ketbra{\psi^\perp}{\psi}\right)$. Therefore, the Hamiltonian and the Lindblad operators are
\begin{eqnarray}
    H&=&0,\nonumber\\
    L_1&=&\left(\ketbra{\psi}{\psi^\perp}+\ketbra{\psi^\perp}{\psi}\right),~~~~L_{k\geq2}=0.
\end{eqnarray}
We can now write the Liouvillian using Eq.~\eqref{decompofL} as
\begin{equation}
    \mathcal{L}=\gamma\left(\left(\ketbra{\psi}{\psi^\perp}+\ketbra{\psi^\perp}{\psi}\right)^{*}\otimes \left(\ketbra{\psi}{\psi^\perp}+\ketbra{\psi^\perp}{\psi}\right)-\mathbb{I}\otimes\mathbb{I}\right).
\end{equation}

\section{Proof of Theorem~2}\label{appendixE}

Let us consider a $d$-dimensional quantum system in the initial state $ \rho_0$ evolving under a CPTP dynamics generated by Liouvillian $\mathcal{L}$. Using the formalism discussed in Appendix~\ref{appendixA}, we can decompose the Liouvillian as $\mathcal{L}=\mathcal{L}_{\rm cl}+\mathcal{L}_{\rm nc}$, where $\mathcal{L}_{\rm cl}$ is diagonal in the eigenbasis of a Hermitian operator $\mathcal{A}=\sum_i a_i\kedbrad{a_i}{a_i}$ and is defined as
 \begin{equation}
  \mathcal{L}_{\rm cl}:=\sum_i \kedbrad{a_i}{a_i}\frac{(a_i|\frac{1}{2}(\mathcal{L}\mathcal{P}_t-\mathcal{P}_t\mathcal{L}^\dagger)|a_i)}{(a_i|\mathcal{P}_t|a_i)},
\end{equation}
where $\mathcal{P}_t=\kedbrad{\tilde{\rho}_t}{\tilde{\rho}_t}$ and $\ked{\tilde{\rho}_{t}}$ is time-evolved state vector. The decomposition of $\mathcal{L}=\mathcal{L}_{\rm cl}+\mathcal{L}_{\rm nc}$ is such that $\mathcal{L}_{\rm cl}$ is uncorrelated with $\mathcal{L}_{\rm nc}$ in any state $\ked{\tilde{\rho}_t}$, i.e., $\Delta \mathcal{L}^{2}=\Delta \mathcal{L}_{\rm cl}^{2}+\Delta \mathcal{L}_{\rm nc}^2$. In the following, we take $\mathcal{A}=\mathcal{P}_0(=\kedbrad{\tilde{\rho}_0}{\tilde{\rho}_0})$, which implies that $\mathcal{L}_{\rm cl}$ commutes with $\mathcal{P}_0$.

Using the exact uncertainty relation for operators acting on Liouville space, $\mathcal{A}(=\mathcal{P}_0)$ and $\mathcal{B}(=\mathcal{L})$ given in Eq.~\eqref{exactuncrln1}, we have
\begin{equation}
     (\delta_{\mathcal{L}}\mathcal{P}_0)^{-2}=\sum_{i=0}^{d^2-1}\frac{ \brad{a_i}\frac{d}{dt}\mathcal{P}_t\ked{a_i}^2}{(a_i|\mathcal{P}_t|a_i)} = 4  \Delta \mathcal{L}_{\rm nc}^2 ,
\end{equation}
where $\frac{d}{dt}\mathcal{P}_t=(\mathcal{L}\mathcal{P}_t+\mathcal{P}_t \mathcal{L}^\dagger-\mathcal{P}_t\tr[\mathcal{P}_t(\mathcal{L}+\mathcal{L}^\dagger)])$ and $\{\ked{a_i}\}$ form an orthonormal basis containing $\ked{\tilde{\rho}_0}$ as one of the basis elements. Since $\{\ked{a_i}\}$ are time independent, we have $ \bradked{a_i}{d\mathcal{P}_t|a_i}=d\bradked{a_i}{\mathcal{P}_t|a_i}=d\bradked{a_i}{\tilde{\rho_t}}\bradked{\tilde{\rho_t}}{a_i}=d\abs{c_i}^2$ where $c_i=\bradked{a_i}{\tilde{\rho}_t}$. Hence, we have
\begin{equation}
    \Delta \mathcal{L}_{\rm nc}^2 dt^2=\frac{1}{4}\sum_{i=0}^{d^2-1} \frac{(d\abs{c_i}^2)^2}{\abs{c_i}^2}=\sum_{i=0}^{d^2-1} (d\abs{c_i})^2.
\end{equation}
Integrating the above equation with respect to time, we obtain
\begin{equation}
    T=\frac{1}{\langle\!\langle\Delta \mathcal{L}_{\rm nc}\rangle\!\rangle}\int_0^T \sqrt{\sum_{i=0}^{d^2-1} \left(\frac{d\abs{c_i}}{dt}\right)^2}dt,
\end{equation}
where $\langle\!\langle\Delta \mathcal{L}_{\rm nc}\rangle\!\rangle=\frac{1}{T}\int_0^T  dt \Delta \mathcal{L}_{\rm nc}$ is the time-averaged $\Delta\mathcal{L}_{\rm nc}$. Let us define a real vector $\ked{\sigma_t}=\sum_{i=0}^{d^2-1} \abs{c_i}\ked{a_i}$ in Liouville space, then $\bradked{\dot{\sigma_t}}{\dot{\sigma_t}}=\sum_{i=0}^{d^2-1} (\frac{d}{dt}\abs{c_i})^2$. The length of the path traced out by the vector 
$\ked{\sigma_t}$ in Liouville space during the evolution is given by
\begin{equation}
    l(\rho_0,\rho_{T})=\int_0^T \sqrt{\bradked{\dot{\sigma_t}}{\dot{\sigma_t}}}dt=\int_0^T\sqrt{\sum_{i=0}^{d^2-1} \left(\frac{d\abs{c_i}}{dt}\right)^2}~dt. \label{equ:Llength}
\end{equation}
It is clear that $ l(\rho_0,\rho_{T})$ is the path length between $\ked{\sigma_0} = \ked{\rho_0}$ and $\ked{\sigma_T} = \sum_{i=0}^{d^2-1} \abs{\bradked{a_i}{\tilde{\rho}_T}}\ked{a_i}$. Using the above two equations, we get the exact time of evolution as
\begin{equation}
  T=\frac{ l(\rho_0,\rho_{T})}{\langle\!\langle\Delta \mathcal{L}_{\rm nc}\rangle\!\rangle}.  
\end{equation}
If the evolution is confined in the two-dimensional Liouville subspace spanned by  $\{\ked{a_0}=\ked{\tilde{\rho}_{0}}, \ked{a_1} = \ked{\tilde{\rho}^{\perp}_{0}}\}$ and $c_i$ are monotonic functions, we have $ l(\rho_0,\rho_{T})=\Theta(\rho_0,\rho_{T})$, otherwise $l(\rho_0,\rho_{T})\geq\Theta(\rho_0,\rho_{T})$. Thus, we obtain the following inequality 
\begin{equation}
  T=\frac{ l(\rho_0,\rho_{T})}{\langle\!\langle\Delta \mathcal{L}_{\rm nc}\rangle\!\rangle} \geq \frac{ \Theta(\rho_0,\rho_{T})}{\langle\!\langle\Delta\mathcal{L}_{\rm nc}\rangle\!\rangle_{T}}.
\end{equation}
Let us recall the Wootters distance \(S_{\rm PD}\) in the space of probability distributions, defined as:
\[
S_{\rm PD} = \int_{\gamma} dS_{\rm PD},
\]
where 
\[
(dS_{\text{PD}})^2 = \frac{1}{4} \sum_i \frac{(dp_i)^2}{p_i}
= \frac{1}{4} F_{cl}dt^2.\]
Here, \(p_i\) are probabilities, $F_{cl}$ is classical Fisher information, and \(\gamma\) is the curve traced out by a quantum system in the space of probability distributions when observed in some basis~\cite{Wootters1981, Braunstein1994, Miller2023}. Since \(|c_i|^2\) in Eq.~\eqref{equ:Llength} can be thought of as associated probabilities corresponding to the basis \(\{\ked{a_i}\}\) and the quantity $\bradked{\dot{\sigma_t}}{\dot{\sigma_t}}$ can be referred to as the associated classical Fisher information $\mathcal{F}_{cl}$ in Liouville space. Hence, \( l(\rho_0, \rho_T)\) can be regarded as analogous to the Wootters distance in Liouville space.\\

{\it \bf Alternative Proof of the Lower Bound in Theorem~2:} Using the decomposition of the Liouvillian, $\mathcal{L}=\mathcal{L}_{\rm cl}+\mathcal{L}_{\rm nc}$, in Eq.~\eqref{Dist}, we have
\begin{eqnarray}
\frac{d}{dt}\tr(\mathcal{P}_t \mathcal{P}_0)&=&\tr\left((\mathcal{L}_{\rm cl}+\mathcal{L}_{\rm nc})\mathcal{P}_t\mathcal{P}_0\right)+\tr\left(\mathcal{P}_t(\mathcal{L}_{\rm cl}+\mathcal{L}_{\rm nc})^\dagger\mathcal{P}_0\right)-\tr\left(\mathcal{P}_t\mathcal{P}_0\right)\tr\left(\left((\mathcal{L}_{\rm cl}+\mathcal{L}_{\rm nc})+(\mathcal{L}_{\rm cl}+\mathcal{L}_{\rm nc})^\dagger\right)\mathcal{P}_t\right)\nonumber\\
&=&\tr(\mathcal{L}_{\rm cl}\mathcal{P}_t\mathcal{P}_0-\mathcal{P}_t\mathcal{L}_{\rm cl}\mathcal{P}_0)+\tr(\mathcal{L}_{\rm nc}\mathcal{P}_t\mathcal{P}_0+\mathcal{P}_t\mathcal{L}^{\dagger}_{\rm nc}\mathcal{P}_0)-\tr(\mathcal{P}_t\mathcal{P}_0)\tr((\mathcal{L}_{\rm nc}+\mathcal{L}^{\dagger}_{\rm nc})\mathcal{P}_t)\nonumber\\
&=&\tr(\mathcal{P}_t[\mathcal{P}_0,\mathcal{L}_{\rm cl}])+\tr(\mathcal{L}_{\rm nc}\mathcal{P}_t\mathcal{P}_0+\mathcal{P}_t\mathcal{L}^{\dagger}_{\rm nc}\mathcal{P}_0)-\tr(\mathcal{P}_t\mathcal{P}_0)\tr((\mathcal{L}_{\rm nc}+\mathcal{L}^{\dagger}_{\rm nc})\mathcal{P}_t),
\end{eqnarray}
where in the second equality, we have used the fact that $\mathcal{L}^{\dagger}_{\rm cl}=-\mathcal{L}_{\rm cl}$. Now, due to the cyclic property of trace and the fact that $[\mathcal{L}_{\rm cl},\mathcal{P}_0]=0$, the first term of the above equation vanishes. Thus, we obtain
\begin{eqnarray}
  \frac{d}{dt}\tr(\mathcal{P}_t\mathcal{P}_0)&=&\tr(\mathcal{L}_{\rm nc}\mathcal{P}_t\mathcal{P}_0+\mathcal{P}_t\mathcal{L}^{\dagger}_{\rm nc}\mathcal{P}_0)-\tr(\mathcal{P}_t\mathcal{P}_0)\tr((\mathcal{L}_{\rm nc}+\mathcal{L}^{\dagger}_{\rm nc})\mathcal{P}_t).
\end{eqnarray}
The above equation implies that $\mathcal{L}_{\rm cl}$ does not contribute to the rate of change of $\tr(\mathcal{P}_t\mathcal{P}_0)$. Now, if we replace $\mathcal{L}$ by $\mathcal{L}_{\rm cl}$ in Proof of Theorem~1 (see Appendix~\ref{appendixB}), we obtain the following bound
\begin{equation}
T \geq \frac{ \Theta(\rho_0,\rho_{T})}{\langle\!\langle\Delta\mathcal{L}_{\rm nc}\rangle\!\rangle_{T}}.
\end{equation}
The above bound is tighter than the bound \ref{mainbound} (Eq. (4) in the main text) because $\Delta\mathcal{L}_{\rm nc}=\sqrt{(\Delta\mathcal{L})^2-(\Delta\mathcal{L}_{\rm cl})^2}$, which implies that $\Delta\mathcal{L}_{\rm nc}\leq\Delta\mathcal{L}$. Hence, $\Delta\mathcal{L}_{\rm nc}$ can be thought of as the refined speed of evolution. 

\section{Bound on Krylov complexity}\label{appendixF}
The time evolution of a density operator in the Krylov space is given by $\ked{\rho_t}=\sum_n i^n \phi_n(t) \ked{K_n}$, where $\{\ked{K_n}\}$ form the Krylov basis, and $\phi_n(0)=\delta_{n0}$.
The Krylov complexity is defined as $ C_{K}(t) :=\tr(\mathcal{P}_t\mathcal{S})$,
where $\mathcal{P}_t=\kedbrad{\rho_t}{\rho_t}/\tr(\rho_0^2)$ and $\mathcal{S}=\sum_n n \kedbrad{K_n}{K_n}$. Since $\tr\left(\mathcal{P}_0 \mathcal{S}\right) =\sum_n \brad{K_0} \mathcal{S} \ked{K_0} = 0$, we have
\begin{align}
    C_K&=\abs{\tr(\mathcal{P}_t-\mathcal{P}_0)\mathcal{S}}\\
    &  \leq\norm{\mathcal{P}_t-\mathcal{P}_0}_{\rm tr}\norm{\mathcal{S}}_{\rm op}\\
    &=2\sqrt{1-\tr(\mathcal{P}_t\mathcal{P}_0)}\norm{\mathcal{S}}_{\rm op},
\end{align}
where in the second line we have used H$\ddot{o}$lder's inequality~\cite{Holder1889}, which for two operators $\mathcal{A}$ and $\mathcal{B}$ is given as $|\tr(\mathcal{A}^\dagger\mathcal{B})|\leq\norm{\mathcal{A}}_{p} \norm{\mathcal{B}}_{q}$,
where $\norm{\mathcal{A}}_{p}=\left[\tr (\vert \mathcal{A}\vert^{p})\right]^{\frac{1}{p}}$ is the Schatten $p$-norm of an operator $\mathcal{A}$, $\vert\mathcal{A}\vert = \sqrt{\mathcal{A}^{\dag}\mathcal{A}}$ and $p,q\in[1,\infty]$ such that $\frac{1}{p} + \frac{1}{q}=1$. Here, we have taken $p=1$, $q=\infty$, and $\norm{\mathcal{A}}_{\infty}=\norm{\mathcal{A}}_{\rm op}$ is the operator norm of $\mathcal{A}$ given by its largest singular value. Thus, we have
\begin{align}
   \frac{ C^2_{K}(t)}{4 \norm{\mathcal{S}}^2_{\rm op}}\leq1-\tr(\mathcal{P}_t\mathcal{P}_0)= 1 - \frac{\bradked{\rho_0}{\rho_t}^2}{(\tr \rho^2_0)^2}.
\end{align}

\section{Speed of evolution for open quantum dynamics in terms of eigenvalues and eigenmodes of Liouvillian}\label{Exp}
The spectral decomposition of a time-independent Liouvillian \(\mathcal{L}\) is given by
\begin{equation}
\mathcal{L} = \sum_{i=0}^{d^{2}-1} \lambda_i \kedbrad{r_i}{l_i},
\end{equation}
where $\lambda_i$ (in general complex) are eigenvalues of $\cal{L}$, and \(\{\ked{r_i}\}\) and \(\{\ked{l_i}\}\) denote the right and left eigenvectors of \(\mathcal{L}\), respectively. The matrices $\{r_i\}$ and $\{l_i\}$ corresponding to \(\{\ked{r_i}\}\) and \(\{\ked{l_i}\}\) are regarded as the right and left eigenmatrices of the Liouvillian. These eigenvectors form a bi-orthonormal basis satisfying the relation \(\bradked{l_i}{r_j} = \delta_{ij}\). It is known that a CPTP dynamics generated by a time-independent Liouvillian $\cal{L}$ has a unique steady (stationary) state if \(\lambda_0=0\) is non-degenerate. This steady state of the system is represented by the density matrix \(\rho_{ss}\), which satisfies \(\mathcal{L}|\rho_{ss}) = 0\), where, \(|\rho_{ss})=|r_{0})\) corresponds to the right eigenvector of the Liouvillian associated with the eigenvalue \(\lambda_0 = 0\)~\cite{Rivas2012}. It is known that the real parts of the eigenvalues are negative, i.e., $\Re(\lambda_{i>0}) < 0$, and can be arranged in ascending order of the absolute values of their real parts as $\abs{\Re(\lambda_0)}=0 <\abs{\Re(\lambda_1)}  < \abs{\Re(\lambda_2)} < \dots < \abs{\Re(\lambda_{d^2-1})}$~\cite{Breuer2007, Rivas2012, Can2019}. The real parts of the eigenvalues determine the relaxation rates and timescales of the system towards the steady state, and the corresponding eigenvectors $|r_{i>0})$ are called the decay modes~\cite{Albert2014, PRE2015}. The imaginary parts of the eigenvalues, $\Im(\lambda_i)$, describe the oscillatory processes that may take place during the evolution. We can then write the time evolved density matrix \(\rho_t\) from an arbitrary initial state $\rho_0$ in the right eigenbasis of \(\mathcal{L}\) as
\begin{equation}\label{time}
|\rho_t) = e^{\mathcal{L} t}|\rho_0) = |\rho_{ss}) + \sum_{i=1}^{d^2-1} e^{\lambda_i t} c_i |r_i),
\end{equation}
where $c_i=\bradked{l_i}{\rho_0}$.
Each decay mode $|r_i)$ in this expansion decays with a characteristic timescale \(1/|\Re(\lambda_i)|\). In general, the slowest-decaying mode $|r_1)$, associated with $\lambda_1$, defines the longest relaxation timescale \(1/|\Re(\lambda_1)|\) for the system to reach approximately close to the steady state starting from a given generic initial state.
Similarly, the fastest-decaying mode $|r_{d^2-1})$, associated with \( \lambda_{d^2-1} \), defines the shortest relaxation timescale for the system to reach the steady state. If $c_{i}=0$ $\forall$ $i \neq d^{2}-1$, this implies that the initial state is orthogonal to all left eigenvectors $|l_i)$ corresponding to the decay modes $|r_i)$ except the left eigenvector corresponding to the fastest-decaying mode, and hence the system exhibits the shortest relaxation timescale. For a given initial state \(\rho_0\), any mode from Eq.~\eqref{time} can be eliminated from the dynamics by applying a suitable unitary such that $c_i=(l_1|U\rho_0U^{\dagger})=0$ \cite{Carollo2021, Kochsiek2022}.
 
 The evolution speed of a system under the Lindblad dynamics is given by
\begin{eqnarray}
    \Delta\mathcal{L}^2&=&{\tr(\mathcal{L}^\dagger\mathcal{L}\mathcal{P}_t)-\tr(\mathcal{L}\mathcal{P}_t)^2}\nonumber\\
    &=&\bradked{\tilde{\rho}_t}{\mathcal{L}^\dagger\mathcal{L}|\tilde{\rho}_t}-\bradked{\tilde{\rho}_t}{\mathcal{L}|\tilde{\rho}_t}^2\nonumber\\
    &=&\frac{1}{\tr(\rho_t^2)}\left(\tr(\dot{\rho}_t^\dagger\dot{\rho}_t)-\tr(\rho_t\dot{\rho}_t)^2\right),
\end{eqnarray}
where in the third line we have used $\tilde{\rho}_t=\rho_t/\sqrt{\tr(\rho_t^2)}$ and $\ked{\dot{\rho}_t}=\mathcal{L}\ked{\rho_t}$. Now, using the above two equations, we can write the evolution speed in terms of the eigenmodes and eigenvalues of the Liouvillian as
\begin{eqnarray}\label{speedeig}
\Delta\mathcal{L}^2&=&\frac{\sum_{ij}e^{(\lambda_i^*+\lambda_j)t}\lambda_i^*\lambda_jc_i^*c_j\tr(r_i^{\dagger}r_j)-\left(\sum_{i}e^{\lambda_it}\lambda_ic_i\tr(\rho_{ss}r_i)+\sum_{ij}e^{(\lambda_i^*+\lambda_j)t}\lambda_jc_i^*c_j\tr(r_i^{\dagger}r_j)\right)^2}{\tr(\rho_{ss}^2)+\sum_{i}e^{\lambda_it}c_i\tr(\rho_{ss}r_i)+\sum_{i}e^{\lambda_i^*t}c_i^*\tr(r_i^{\dagger}\rho_{ss})+\sum_{ij}e^{(\lambda_i^*+\lambda_j)t}c_i^*c_j\tr(r_i^{\dagger}r_j)},
\end{eqnarray}
where $c_i=\tr(l_i^\dagger\rho_0)=(l_i|\rho_0)$, and $*$ denotes the complex conjugate. The above expression for the speed of evolution is dictated by the exponential terms $e^{\Re(\lambda_{i})t}$ in the long time limit. The exponential terms with larger $|\Re(\lambda_{i})|$ will decay and vanish faster compared to the terms with relatively smaller $|\Re(\lambda_{i})|$. It is important to note that the oscillatory effects due to $e^{\Im(\lambda_{i})t}$ will be negligible in the long time limit. If for a given initial state all $c_{i\neq d^{2}-1}=0$, then only the fastest decaying mode $|r_{d^2-1})$ survives, in this scenario the system approaches the steady state with the maximal possible speed. The speed of evolution in this scenario is always higher than any other initial state which activates the relatively slower decaying modes. Here, mode activation refers to the contribution of the specific eigenmodes of the Liouvillian to the evolution of a quantum state. As already mentioned above, any mode from Eq.~\eqref{time} can be eliminated from the dynamics by applying a suitable unitary on the initial state such that $c_i=(l_i|U\rho_0U^{\dagger})=0$~\cite{Carollo2021, Kochsiek2022}. Thus, in general, the speed of evolution can be enhanced by eliminating the slower decaying modes by rotating the initial state by a unitary. Let us now consider the case (as discussed in the main text) where the Liouvillian $\mathcal{L}$ induces the optimal dynamics that connects the initial state $\rho_0$ and its orthogonal state $\rho_0^{\perp}$ via geodesic. If $\rho_0^{\perp}=\rho_{ss}$ is the steady state of the system, then we know that the time-evolved state that traces the geodesic in Liouville space can be expressed as (see Eq.~\eqref{Line})
\begin{eqnarray}\label{geodesicbiortho}
|\rho_t) &=& P_t|\rho_0) + (1 - P_t)|\rho_{ss})\nonumber\\
&=&|\rho_{ss})+P_t(|\rho_0)-|\rho_{ss}))\nonumber\\
&=&|\rho_{ss})+P_t\sum_{i}c_i|r_i),
\end{eqnarray}
where in the third line, we have used the expansion of $\rho_0$ in the eigenbasis of $\mathcal{L}$. Comparing Eq.~\eqref{time} and Eq.~\eqref{geodesicbiortho}, we obtain
\begin{equation}
    \sum_{i}(P_t-e^{\lambda_it})c_i|r_i)=0.
\end{equation}
Since $|r_i)$ are linearly independent, we get $(P_t-e^{\lambda_it})c_i=0~\forall~i\geq1$. It implies that either $c_i=(l_i|\rho_0)=0$ or $P_t=e^{\lambda_it}$. If in addition to evolving along the geodesic the system also has the maximal speed of evolution, then $c_{i}=0~\forall~i\neq{d^2-1}$. Therefore, in such a case the time-evolved state can be written as 
\begin{eqnarray}
|\rho_t) &=&|\rho_{ss})+ e^{-\lambda_{m}t} c_{d^{2}-1}|r_{d^{2}-1}),
\end{eqnarray}
where we have $\lambda_{d^{2}-1}=-\lambda_{m}$, with $\lambda_{m}>0$. In the above equation, we assume that $\lambda_{d^{2}-1}$ is non-degenerate; however, for cases where it is degenerate, similar reasoning can be applied. In this case, the evolution speed of the system is
\begin{eqnarray}
\Delta\mathcal{L}^2&=&\frac{e^{-2\lambda_{m}t}\abs{\lambda_m}^2\abs{c_m}^2\tr(r_m^{\dagger}r_m)-\left(e^{-\lambda_{m}t}\lambda_mc_m\tr(\rho_{ss}r_m)+e^{-2\lambda_{m}t}\lambda_m\abs{c_m}^2\tr(r_m^{\dagger}r_m)\right)^2}{\tr(\rho_{ss}^2)+e^{-\lambda_{m}t}c_m\tr(\rho_{ss}r_m)+e^{-\lambda_{m}t}c_m^*\tr(r_m^{\dagger}\rho_{ss})+e^{-2\lambda_{m}t}\abs{c_m}^2\tr(r_m^{\dagger}r_m)},
\end{eqnarray}
where $m$ represents $d^{2}-1$.

In general, the distance between the initial and time-evolved state can be expressed as
\begin{align}\label{disteig}
\Theta(\rho_0, \rho_t)&= \arccos{\left(\frac{(\rho_0|\rho_t)}{\sqrt{\tr{(\rho_0^2)}}\sqrt{\tr{(\rho_t^2)}}} \right)} \nonumber \\
&= \arccos{\left(\frac{\tr{(\rho_{ss}^2)} + \sum_{i} c_i \tr{(\rho_{ss}r_i)}+\sum_{i}c_i^{*}\tr{(r_i^\dagger\rho_{ss})}+\sum_{i,j}c_i^{*}c_je^{\lambda_jt}\tr{(r_i^\dagger r_j)}}{\sqrt{\tr{(\rho_0^2)}}\sqrt{\tr(\rho_{ss}^2)+\sum_{i}e^{\lambda_it}c_i\tr(\rho_{ss}r_i)+\sum_{i}e^{\lambda_i^*t}c_i^*\tr(r_i^{\dagger}\rho_{ss})+\sum_{ij}e^{(\lambda_i^*+\lambda_j)t}c_i^*c_j\tr(r_i^{\dagger}r_j)}}\right)}.
\end{align}

Hence, using Eqs.~\eqref{speedeig} and \eqref{disteig}, it is established that $T_{QSL}=\Theta(\rho_0, \rho_T)/\langle\!\langle \Delta \mathcal{L} \rangle\!\rangle_{T}$ (lower bound in Eq.\eqref{LMT}) is dictated by the eigenvalues of Liouvillian and the overlap of the initial state with the left eigenvectors of Liouvillian.

\section{Mpemba effect in Thermalizing (Amplitude damping) process}\label{appendixG}
Let us consider a qubit system (described by the Hamiltonian $H=-\frac{\omega}{2} \sigma_z$) interacting with a thermal bath. The dynamics of the qubit system can be described by the following Lindblad master equation~\cite{Breuer2007,lidar2020} 
\begin{align}\label{amplitude}
    \frac{d\rho_t}{dt} =& \gamma(1+n)\left(\sigma_{-}\rho_t \sigma_{+}- \frac{1}{2}\{\sigma_{+}\sigma_{-},\rho_t\}\right)+\gamma n \left(\sigma_{+}\rho_t \sigma_{-}- \frac{1}{2}\{\sigma_{-}\sigma_{+},\rho_t\}\right),
\end{align}
where $\sigma_{-} = \ketbra{0}{1}$ and $\sigma_{+}=\ketbra{1}{0}$ are the lowering and raising operators, $\gamma$ is Weiskopf-Wigner decay constant, $n=1/{(e^{\beta \omega}-1)}$ is the average number of photons in the bath with frequency $\omega$, and $\beta=1/k_{B}\tau$ is the inverse temperature of the bath. The bath at zero temperature corresponds to $n=0$. Let the initial state of the system be $\rho_0 = \ketbra{\psi}{\psi}$, where $\ket{\psi} = \alpha \ket{0}+\sqrt{1-\alpha^2}\ket{1}$.
Now, by solving the Lindblad master equation~\eqref{amplitude}, we obtain the state of the system at time $t$
\begin{equation}
\rho_t=\left(
\begin{array}{cc}
 \frac{n+e^{-(2 n+1) t \gamma } \left(\alpha ^2+n \left(2 \alpha ^2-1\right)-1\right)+1}{2 n+1} & e^{-\frac{1}{2} (2 n+1) t \gamma } \alpha  \sqrt{1-\alpha ^2} \\
 e^{-\frac{1}{2} (2 n+1) t \gamma } \alpha  \sqrt{1-\alpha ^2} & \frac{e^{-(2 n+1) t \gamma } \left(-\alpha ^2+n \left(-2 \alpha ^2+e^{(2 n+1) t \gamma }+1\right)+1\right)}{2 n+1} \\
\end{array}
\right).
\end{equation}
The distance between the initial and time-evolved state is given by
\begin{equation}
  \Theta(\rho_{0},\rho_{t})= \arcsec\left(\frac{(2 n+1) e^{\gamma  (2 n+1) t} \sqrt{\frac{e^{-2 \gamma  (2 n+1) t} f(t,n,\gamma)}{(2 n+1)^2}+\frac{1}{2} \left(\frac{1}{(2 n+1)^2}+1\right)}}{2 \alpha ^4 (2 n+1)-\alpha ^2 (4 n+3)-2 \left(\alpha ^2-1\right) \alpha ^2 (2 n+1) e^{\frac{1}{2} \gamma  (2 n+1) t}+\left(\alpha ^2+n\right) e^{\gamma  (2 n+1) t}+n+1}\right),
\end{equation}
where $f(t,n,\gamma):=\left(2 \left(\alpha ^2+\left(2 \alpha ^2-1\right) n-1\right)^2-2 \left(\alpha ^4 (2 n+1)^2-2 \alpha ^2 (n+1) (2 n+1)+n+1\right) e^{\gamma  (2 n+1) t}\right)$. The steady state of the system, which corresponds to the eigenstate of the Liouvillian with zero eigenvalue, is given by
\begin{equation}
    \rho_{ss}=\left(
\begin{array}{cc}
 \frac{n+1}{2 n+1} & 0 \\
 0 & \frac{n}{2 n+1} \\
\end{array}
\right).
\end{equation}
Hence, the distance between the steady state and time-evolved state is given by 
\begin{align}
    \Theta(\rho_{ss},\rho_{t})= \arccos\left(\frac{\alpha ^2+\left(2 \alpha ^2-1\right) n+(2 n (n+1)+1) e^{\gamma  (2 n+1) t}-1}{\sqrt{(2 n (n+1)+1) \left(2 \left(\alpha ^2+\left(2 \alpha ^2-1\right) n-1\right)^2+g(t,n,\gamma,\alpha)\right)}}\right),
\end{align}
where $g(t,n,\gamma,\alpha):=\left(-2 \left(\alpha ^4 (2 n+1)^2-2 \alpha ^2 (n+1) (2 n+1)+n+1\right) e^{\gamma  (2 n+1) t}+(2 n (n+1)+1) e^{2 \gamma  (2 n+1) t}\right)$.
\begin{figure}[htp]
\centering
\includegraphics[width=5.4cm]{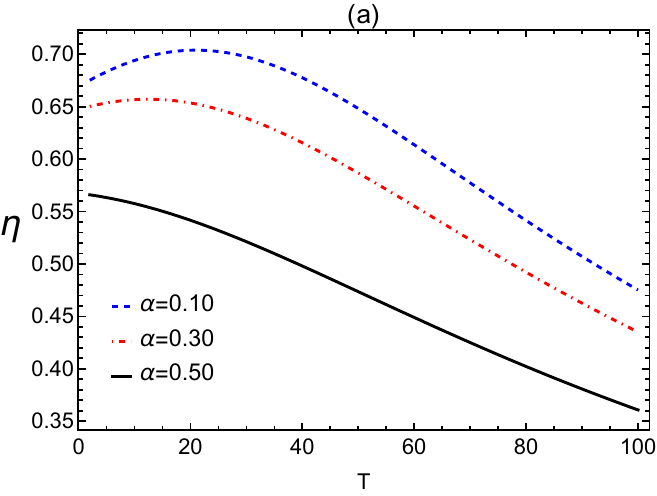}
\space
\space
\space
\includegraphics[width=5.4cm]{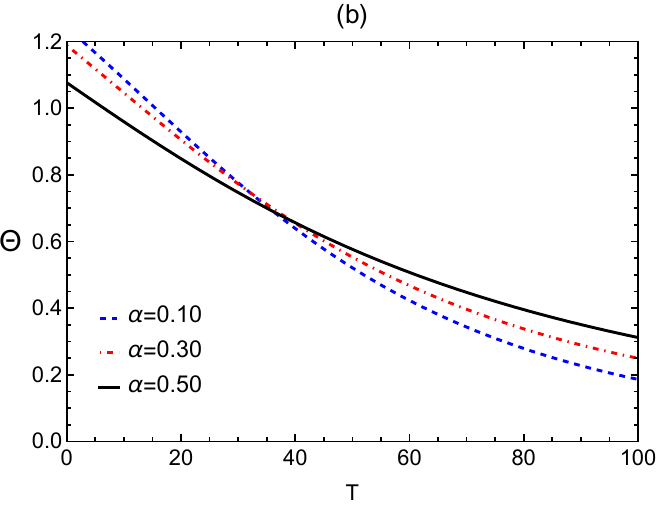}
\space
\space
\space
\includegraphics[width=5.4cm]{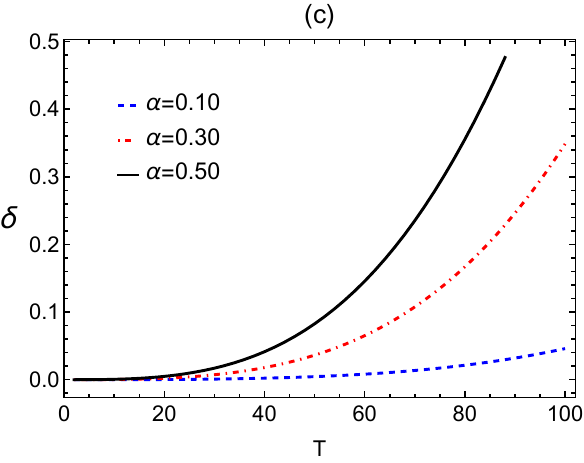}
 \caption{{\bf Mpemba effect in a qubit system under the action of the thermal environment at finite temperature ($n=0.5$).} The computations are performed for various values of $\alpha$ (which corresponds to different initial states) and the parameter $\gamma = 0.01$. (a) Speed efficiency (see Eq.~\eqref{speff} in the main text) for different initial states. (b) Distance between different non-equilibrium states and the steady state, i.e., $\Theta(\rho_{T},\rho_{ss})$. (c) Tightness of the bound given in Eq.~\ref{mainbound} (Eq.~\eqref{LMT} in the main text) for different initial states.}
\label{MpembaF}
\end{figure}

The speed of evolution of the qubit system under the action of the thermal environment is given by
 \begin{equation}
     \Delta \mathcal{L}=\frac{\gamma  (2 n+1)^2 e^{\frac{1}{2} \gamma  (2 n+1) t} \sqrt{h_1(t,n,\gamma,\alpha)-\alpha ^2 \left(\alpha ^2-1\right) (2 n (n+1)+1) e^{2 \gamma  (2 n+1) t}}}{\sqrt{2} \left(h_2(t,n,\gamma,\alpha)+(2 n (n+1)+1) e^{2 \gamma  (2 n+1) t}\right)},
 \end{equation}
where $\mathcal{L}$ is the Liouvillian corresponding to Eq.~\eqref{amplitude}, $\Delta \mathcal{L}$ is its variance in the state $\ked{\tilde{\rho}_t}$, and
\begin{eqnarray*}
    h_1(t,n,\gamma,\alpha):&=&2 \left(\alpha ^2+\left(2 \alpha ^2-1\right) n-1\right) \left(\alpha ^4+\left(2 \alpha ^2-1\right) n-1\right) e^{\gamma  (2 n+1) t}-2 \alpha ^2 \left(\alpha ^2-1\right) \left(\alpha ^2 (-(2 n+1))+n+1\right)^2,\nonumber\\
    h_2(t,n,\gamma,\alpha):&=&2 \left(\alpha ^2+\left(2 \alpha ^2-1\right) n-1\right)^2-2 \left(\alpha ^4 (2 n+1)^2-2 \alpha ^2 (n+1) (2 n+1)+n+1\right) e^{\gamma  (2 n+1) t}.
\end{eqnarray*}
The operator-norm of Liouvillian $\mathcal{L}$ is given by
\begin{eqnarray*}
\norm{\cal{L}}_{\rm op}^2=\max\{\ 2 \left(1 + 2n(1 + n)\right) \gamma^2,\ \frac{1}{4} \left(1 + 2n\right)^2\gamma^2\}.
\end{eqnarray*}

Using the above equations, we plot Fig.-\ref{MpembaF} for different initial states, which shows the Mpemba effect at $n=0.5$ (finite temperature). The explanation and the inference of the plots are exactly similar to that provided in the main text. Note that for the plots in the main text, the relevant quantities needed, which are provided in the above equations, correspond to $n=0$.

\twocolumngrid
\bibliography{name.bib}
\end{document}